\newenvironment{proof}{{\bf Proof:  }}{\hfill\rule{2mm}{2mm}\vspace*{5pt}}
\newtheorem{theorem}{Theorem}[section]
\newtheorem{corollary}{Corollary}[section]
\newtheorem{definition}[theorem]{Definition}
\newtheorem{lemma}[theorem]{Lemma}
\newtheorem{example}[theorem]{Example}
\newcommand{\floor}[1]{\lfloor #1 \rfloor }
\newcommand{\tabincell}[2]{\begin{tabular}{@{}#1@{}}#2\end{tabular}}
\newcommand{\MMS}{\mathsf{MMS}}
\renewcommand{\SP}{\mathsf{SequPick}}
\newcommand{\RD}{\mathsf{RandDecl}}
\newcommand{\RR}{\mathsf{RounRobi}}
\title{Strategyproof and Approximately Maxmin Fair Share Allocation of Chores\thanks{The authors are ordered alphabetically. 
This work is partially supported by NSF CAREER Award No. 1553385.} }
\author{
Haris Aziz$^1$\and
Bo Li$^2$\And
Xiaowei Wu$^3$\\
\affiliations
$^1$UNSW Sydney and Data61 CSIRO, Australia\\
$^2$Department of Computer Science, Stony Brook University, USA\\
$^3$Faculty of Computer Science, University of Vienna, Austria\\
\emails
haziz@cse.unsw.edu.au,
boli2@cs.stonybrook.edu,
xiaowei.wu@univie.ac.at
}
\begin{document}

\maketitle

\begin{abstract}
	We initiate the work on fair and strategyproof allocation of indivisible chores.
	The fairness concept we consider in this paper is maxmin share (MMS) fairness. 
	We consider three previously studied models of information elicited from the agents: the ordinal model, the cardinal model, and the public ranking model in which the ordinal preferences are publicly known.
	We present both positive and negative results on the level of MMS approximation that can be guaranteed if we require the algorithm to be strategyproof. Our results uncover some interesting contrasts between the approximation ratios achieved for chores versus goods. 
\end{abstract}

\section{Introduction}

Multi-agent resource allocation is one of the major research topics in artificial intelligence~\citep{BCM15a}. We consider fair allocation algorithms of indivisible chores when agents have additive utilities. The fairness concept we use in this paper is the intensively studied and well-established maxmin share fairness. The maxmin fair share of an agent is the best she can guarantee for herself if she is allowed to partition the items but then receives the least preferred bundle.

In this paper we take a mechanism design perspective to the problem of fair allocation. We impose the constraint that the algorithm should be strategyproof, i.e., no agent should have an incentive for report untruthfully for profile of valuations.
The research question we explore is the following one. 
\emph{When allocating indivisible chores, what approximation guarantee of maxmin share fairness can be achieved by strategyproof algorithms?}
% While imposing strategyproofness, we seek to understand the approximation level of maxmin fair share fairness that can be achieved.
% \begin{quote}
% 	\emph{	When allocating indivisible chores, what approximation guarantee of maxmin share fairness can be achieved by strategyproof algorithms?}
% \end{quote}
This approach falls under the umbrella of approximation mechanism design without money that has been popularized by \citet{PrTe13a}.

Maxmin share fairness  was proposed by \citet{Budi11a} as a fairness concept for allocation of indivisible items. The concept coincides with the standard proportionality fairness concept if the items to be allocated are divisible. There have been several works on algorithms that find an approximate MMS allocation~\citep{PrWa14a,AMNS15a,BM17a,SGHSY18,ARSW17a}. None of these works took a mechanism design perspective to the problem of computing approximately MMS allocation. \citet{ABM16a} were the first to embark on a study of strategyproof and approximately MMS fair algorithms. Their work only focussed on the case of goods. There are many settings in which agents may have negative utilities such as when chores or tasks are allocated. In this paper, we revisit strategyproof fair allocation by considering the case of chores.

\subsection{Our results}

\begin{table*}[h!]
	\begin{center}
		\begin{tabular}{ |l|ll|ll|}
			\hline
			& Goods & ~ & Chores & ~\\ \hline
			& Lower & Upper  & Lower  & Upper \\ \hline
			Ordinal & $\Omega(\log n)$& $O(m-n)$ &{\tabincell{c}{$\frac{4}{3}$ (D) \\$\frac{6}{5}$ (R)}} & {\tabincell{c}{$O(\log \frac{m}{n})$ (D) \\ $O(\sqrt{\log n})$ (R)}}  \\ \hline
			Cardinal &$2$ & $O(m-n)$ & {\tabincell{c}{$\frac{4}{3}$ (D)\\ N/A (R)}} & {\tabincell{c}{$O(\log \frac{m}{n})$ (D) \\ $O(\sqrt{\log n})$ (R)}} \\ \hline
			Public ranking& $\frac{6}{5}$ & {\tabincell{c}{$\frac{3}{2}$ for $n=2$ \\ $2$ for $n=3$ \\ $O(n)$ for any $n$}} & $\frac{6}{5}$ (D) & {\tabincell{c}{$\frac{3}{2}$ for $n\leq 3$ \\ $2$ for any $n$}} \\\hline
		\end{tabular}
	\end{center}
	\caption{%Classification of the literature on MMS allocation of indivisible items.
		Lower and upper bounds on approximation of MMS fairness of strategyproof algorithms. 
		The results for goods were proved by~\citet{ABM16a}, all of which concern deterministic algorithms.
		The results for chores are proved in this paper, where we use (D) and (R) to indicate deterministic and randomized algorithms, respectively. 
		%\haris{Please give theorem references for the chore results. The lower bounds do not make sense. Shouldn't lower bound for ordinal be at least much as lower bound for cardinal? } {\color{red} Here, $O(m-n)$ is the lower bound of picking
		%sequence mechanism.}
	}
	\label{table:lit}
\end{table*}

We initiate the study of maxmin share (MMS) allocations of $m$ indivisible chores among $n$ strategic agents.
It is assumed that all agents have underlying cardinal additive utilities over the chores. 
We consider three preference models in this work. 

\begin{itemize}
	\item \textbf{Cardinal model}: agents are asked to report their cardinal utilities over the items.
	\item \textbf{Ordinal model}: agents are only allowed or asked to express their ordinal rankings over the items.
	\item \textbf{Public ranking model}: all agents' rankings are public information and the agents are asked to report their utilities that are consistent the known ordinal rankings.
\end{itemize}

First, for cardinal and ordinal models, we design a deterministic sequential picking algorithm $\SP$, 
which is strategyproof and unexpectedly achieves an approximation of $O(\log {m\over n})$\footnote{In this paper we use $\log()$ to denote $\log_2()$.}.
%The formal definition of sequential picking algorithm is provided in Section \ref{sec:ord:det}.
Roughly speaking, given an order of the agents, a sequential picking algorithm lets each agent $i$ pick $a_{i}$ items and leave.
\citet{ABM16a} proved that when the items are goods, the best a sequential picking algorithm can guarantee is an approximation of $\floor{(m-n+2)/2}$,
and such an approximation can be easily achieved by letting each of the first $n-1$ agents select one item and allocating all the remaining items to the last agent.
Compared to their result, we show that by carefully selecting the $a_{i}$'s, when items are chores,
we are able to significantly improve the bound to $O(\log {m\over n})$.

Next, we further improve the approximation ratio for cardinal and ordinal models by randomized algorithms.
Particularly, we show that by randomly allocating each item but allowing each agent to
recognize a set of ``bad'' items and to be able to decline each allocated ``bad'' item once,
the resulting algorithm is strategyproof and achieves an approximation ratio of $O(\sqrt{\log n})$ in expectation.

We complement these upper bound results with lower bound results by showing that for cardinal and ordinal model,
no deterministic strategyproof algorithm has a better than $\frac{4}{3}$ approximation.
For the ordinal model, we prove that the lower bound of $\frac{4}{3}$ cannot be improved by non-strategyproof algorithms.
For randomized non-strategyproof algorithms, this bound cannot be improved to be better than $\frac{6}{5}$.

Finally, for the public ranking model, we show that the greedy round-robin algorithm is strategyproof and 
achieves 2-approximation. 
This is also surprising as when the items are goods, the best known approximation is $O(n)$ by \citet{ABM16a}.
When $n\leq 3$, we give a strategyproof divide-and-choose algorithm that further improves this ratio to $\frac{3}{2}$.
We complement these upper bound results by a lower bound of $6\over 5$ for any deterministic strategyproof algorithms.

%Strategyproof for ordinal implies strategyproof for the cardinal which implies strategyproof for the public ranking model.
%For each of the preference models, we pursue our research question of understanding the approximation of MMS fairness that can be achieved by strategyproof algorithms.

Our results as well as previous results for the case of goods are summarized in Table~\ref{table:lit}.

%We initiate research on strategyproof and fair allocation of indivisible chores. We present the following technical results. 
%We present a polynomial-time strategyproof and  $O(\log \frac{m}{n})$-approximation algorithm for the ordinal model and hence the other two models as well. We show that the algorithm cannot achieve a better approximation ratio. 
%We prove that no ordinal mechanism (strategyproof or otherwise) has approximation ratio smaller than $\frac{4}{3}$ even for $2$ agents and $4$ items.
%We show that for the cardinal model, no strategyproof mechanism has approximation ratio smaller than $\frac{3}{2}$, even for $2$ agents and $4$ items.
%Finally, for the public ranking model, it is proven that no strategyproof mechanism has approximation ratio smaller than $\frac{6}{5}$, even for $2$ agents.

%Our results show that there are some interesting contrasts between strategyproof and fair allocation of goods versus chores. For example, for the cardinal model, the best known upper bound for strategyproof mechanisms for goods is $O(m-n)$. On the other hand, we present an upper bound of $O(\log \frac{m}{n})$ for chores. 

\subsection{Related Work}

%\cite{bouveret2014manipulating} studies when a sequential picking algorithm is strategyproof.

MMS fairness is weaker than the proportionality fairness concept that requires that each agent gets at least $1/n$ of the total utility she has for the set of all items~\citep{BoLe15a}. However for indivisible items, there may not exist an allocation that guarantees any approximation for the proportionality requirement.

Most of the work on fair allocation of items is for the case of goods although recently,
fair allocation of chores~\citep{ARSW17a} or combinations of goods and chores~\citep{ACI19a} has received attention as well.
% Recently, researchers have started to consider the situation when the items allocated are not goods but some
% undesirable items, i.e. chores \citep{ARSW17a}.
\citet{ARSW17a} proved that MMS allocations do not always exist but can be 2-approximated by a simple algorithm. \citet{ARSW17a} also presented a PTAS for relaxation of MMS called optimal MMS.
\citet{BM17a} presented an improved approximation algorithm for MMS allocation of chores. 

Strategyproofness is a challenging property to satisfy for fair division algorithms. \citet{ABM16a} initiated the work on strategyproof goods allocation with respect to MMS fairness. In particular they proved the results covered in the goods part of Table~\ref{table:lit}. There is also work on the approximation of welfare that can be achieved by strategyproof algorithms for allocation of \emph{divisible} items~\citep{ACF+16a}.

\section{Model and Preliminaries}

For the fair allocation problem, $N$ is a set of $n$ agents, and $M$ is a set of $m$ indivisible items.
The goal of the problem is to fairly distribute all the items to these agents.
Different agents may have different preferences for these items and these preferences are generally captured by
utility or {\em valuation} functions: each agent $i$ is associated with a function $v_i:2^{M}\to \mathbb{R}$ that valuates any set of items.
% \paragraph{Solution Concept.}

\paragraph{MMS fairness.}

%Intuitively, to allocate a set of items among $n$ agents, if the items can be arbitrarily divided,
%the fairest way is to allocate each agent $\frac{1}{n}$ fraction of the whole items.
%However, due to the indivisibility of the items, some unfairness is inevitable; consider the case with one item and two agents.
Imagine that agent $i$ gets the opportunity to partition all the items into $n$ bundles, but she is the last to choose a bundle.
Then her best strategy is to partition the items such that the smallest value of a bundle is maximized.
Let $\Pi(M)$ denote the set of all possible $n$-partitionings of $M$.
Then the {\em maxmin share (MMS)} of agent $i$ is defined as
\begin{equation}\label{eq:mms:negative}
\MMS_i = \max_{\langle X_1, \ldots, X_n\rangle \in \Pi(M)} \min_{j \in N}  v_i(X_j).
\end{equation}
If agent $i$ finally receives a bundle of items with value at least $\MMS_i$,
she is happy with the final allocation.

\medskip
%\paragraph{Solution Concept.}
In this work, it is assumed that items are chores: $v_i(\{j\})\leq 0$ for all $i\in N$ and $j\in M$.
Then each agent actually wants to receive as few items as possible.
For ease of analysis, we ascribe a disutility or \emph{cost} function $c_i=-v_i$ for each agent $i$.
In this paper, we assume that the cost function of each agent $i$ is additive.
We represent each cost function $c_{i}$ by a vector $(c_{i1},\cdots,c_{im})$ where $c_{ij}=c_{i}(\{j\})$ is the cost of agent $i$ for item $j$.
Then for any $S\subseteq M$ we have $c_{i}(S)=\sum_{j\in S}c_{ij}$.
Agent $i$'s maxmin share can be equivalently defined as
\begin{equation}\label{eq:mms:positive}
\MMS_i = \min_{\langle X_1, \ldots, X_n\rangle \in \Pi(M)} \max_{j \in N}  c_i(X_j).
\end{equation}

%Unless explicitly mentioned, we assumed that all the cost functions are normalized, i.e. $c_{i}(M)=1$.

Note that the maxmin threshold defined in Equation \ref{eq:mms:positive} is positive which is 
the opposite number of the threshold defined in Equation \ref{eq:mms:negative}.
Throughout the rest of our paper, we choose to use the second definition.
For each agent $i$, we use a permutation over $M$, $\sigma_{i}:M\to [m]$, to denote agent $i$'s {\em ranking} on the items: $c_{i\sigma_{i}(1)}\geq\cdots \geq c_{i\sigma_{i}(m)}$.
In other words, item $\sigma_i(1)$ is the least preferred item and $\sigma_i(m)$ is the most preferred.

Let $x=(x_{i})_{i\in N}$ be an {\em allocation}, where $x_{i}=(x_{ij})_{j\in M}$ and $x_{ij}\in \{0,1\}$ indicates if agent $i$ gets item $j$ under allocation $x$.
A feasible allocation guarantees a partition of $M$, i.e., $\sum_{i\in N}x_{ij}=1$ for any $j\in M$.
We somewhat abuse the definition and let $X_{i}=\{j\in M | x_{ij}=1\}$ and $c_{i}(x)=c_{i}(x_{i})=c_{i}(X_{i})$.
An allocation $x$ is called an {\em MMS allocation} if $c_{i}(x_{i}) \leq \MMS_{i}$ for every agent $i$
and {\em $\alpha$-MMS allocation} if $c_{i}(x_{i}) \leq \alpha\MMS_{i}$ for all agents $i$.

We first state the following simple properties of MMS.
Lemma~\ref{lem:mms:bound} implies if an agent receives $k$ items, then its cost is at most $k\cdot \MMS_i$.

\begin{lemma} \label{lem:mms:bound}
For any agent $i$ and any cost function $c_{i}$, 
\begin{itemize}
\item $\MMS_{i}\geq \frac{1}{n}c_{i}(M)$;
\item $\MMS_{i}\geq c_{ij}$ for any $j\in M$.
\end{itemize}
\end{lemma}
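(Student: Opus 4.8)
The plan is to derive both inequalities directly from the definition in Equation~\ref{eq:mms:positive}, using only additivity of $c_i$ and the fact that chore costs are nonnegative ($c_{ij}\geq 0$ for all $j$). In both cases the strategy is the same: establish the claimed lower bound for \emph{every} partition $\langle X_1,\ldots,X_n\rangle\in\Pi(M)$, and then observe that it is therefore also a lower bound for the particular partition attaining the minimum in the definition of $\MMS_i$.

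For the first bullet, I would fix an arbitrary $n$-partition $\langle X_1,\ldots,X_n\rangle$ of $M$. Since the bundles partition $M$ and $c_i$ is additive, $\sum_{j\in N} c_i(X_j) = c_i(M)$. Hence the largest bundle cost is at least the average: $\max_{j\in N} c_i(X_j) \geq \frac{1}{n}\sum_{j\in N} c_i(X_j) = \frac{1}{n}c_i(M)$. As this holds for every partition, taking the minimum over $\Pi(M)$ on the left-hand side preserves the inequality, giving $\MMS_i \geq \frac{1}{n}c_i(M)$.

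For the second bullet, fix an arbitrary $n$-partition $\langle X_1,\ldots,X_n\rangle$ and an item $j\in M$. Item $j$ lies in exactly one bundle, say $X_k$. Because $X_k\supseteq\{j\}$ and all costs are nonnegative, $c_i(X_k) = \sum_{j'\in X_k} c_{ij'} \geq c_{ij}$. Therefore $\max_{\ell\in N} c_i(X_\ell) \geq c_i(X_k) \geq c_{ij}$, and since this bound is independent of the partition, taking the minimum over $\Pi(M)$ yields $\MMS_i \geq c_{ij}$.

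I do not expect any genuine obstacle here: the statement is an immediate structural consequence of the definition, and the only place where one must be slightly careful is the use of nonnegativity of chore costs in the second bullet (this is exactly the point where the argument differs from the goods setting, where bundle values can exceed any single item's value but the monotonicity direction is reversed). The final remark noted in the text — that an agent receiving $k$ items has cost at most $k\cdot\MMS_i$ — then follows by summing the second bullet over the (at most $k$) items in that agent's bundle.
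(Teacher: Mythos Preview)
Your proof is correct and essentially matches the paper's: both bullets are obtained by showing the desired lower bound holds for the maximum bundle in \emph{every} partition and then minimizing over $\Pi(M)$. The only cosmetic difference is that the paper proves the second bullet by reducing to the single most costly item $\sigma_i(1)$, whereas you argue directly for an arbitrary item $j$; both arguments rely on nonnegativity of chore costs in the same way.
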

\begin{proof}
 The first inequality is clear as for any partition of the items, the largest bundle has cost at least $\frac{1}{n}c_{i}(M)$.

 For the second inequality, it suffices to show $\MMS_{i}\geq c_{i\sigma_{i}(1)}$.
 This is also clear since in any partitioning of the items, the largest bundle should have cost at least $c_{i \sigma_{i}(1)}$.
\end{proof}
 
By Lemma \ref{lem:mms:bound}, it is easy to see that if $m\leq n$, any allocation that allocates at most one item to each agent is an MMS allocation. Thus throughout this paper, we assume $m > n$.

\paragraph{Models.}
In the {\em cardinal model},
the agents are asked to express their cardinal costs over $M$.
A deterministic {\em cardinal algorithm} is denoted by a function 
$\mathcal{M}: (\mathbb{R}^{m})^{n} \to \Pi(M)$.
If an algorithm is restricted to only use the resulting rankings of the reported cardinal cost functions to allocate the items, we called it an {\em ordinal algorithm} and the corresponding problem is called the {\em ordinal model}.
If the algorithm has the information of all agents' rankings by default, and 
every agent has to report her cost function with respect to the known ranking,
the algorithm is called a {\em public ranking algorithm} and the corresponding problem is called the {\em public ranking model}.
A deterministic algorithm ${\cal M}$ is called ($\alpha$-approximate) MMS if
for any cost functions, it always outputs an ($\alpha$-) MMS allocation.
%A randomized algorithm returns a distribution ${\cal M}$ over $\Pi(M)$ and is called ($\alpha$-approximate) MMS if for any cost functions $c_{1},\cdots, c_{n}$, 
%$\mathbf{E}_{x\sim \mathcal{M}(c_{1},\cdots, c_{n})}c_{i}(x)\leq \MMS_{i}$ (or $\alpha$-$\MMS_{i}$, respectively)
%for every agent $i$.
A randomized algorithm ${\cal M}$ returns a distribution over $\Pi(M)$ and is called $\alpha$-approximate MMS if for any cost functions $c_{1},\cdots, c_{n}$, 
$\mathbf{E}_{x\sim \mathcal{M}(c_{1},\cdots, c_{n})} [\max_{i\in N}\frac{c_{i}(x)}{\MMS_{i}}] \leq \alpha$.\footnote{Note that 
if the $\alpha$-approximation is defined as for every agent $i$, $\mathbf{E}_{x\sim \mathcal{M}(c_{1},\cdots, c_{n})}c_{i}(x)\leq \alpha\MMS_{i}$,
the problem becomes trivial as uniform-randomly allocating all items optimizes $\alpha$ to be 1.}

\medskip
In this work, we study the situation when the costs are private information of the agents. 
Each agent may withhold her true cost function in order to minimize her own cost for the allocation.
We call an algorithm \emph{strategyproof (SP)} if no agent can unilaterally misreport her cost function to reduce her cost.

Formally, a deterministic algorithm $\mathcal{M}$ is called SP if for every agent $i$, cost function $c_{i}$ and the cost functions $c_{-i}$ of other agents, $c_{i}(\mathcal{M}(c_{i},c_{-i}))\geq c_{i}(\mathcal{M}(c'_{i},c_{-i}))$ holds for all $c'_{i}$.
We call a randomized algorithm $\mathcal{M}$ {\em SP in expectation} if for every $i$, $c_{i}$ and $c_{-i}$, $\mathbf{E}_{x\sim \mathcal{M}(c_{i},c_{-i})}c_{i}(x)
\geq \mathbf{E}_{x\sim \mathcal{M}(c'_{i},c_{-i})} c_{i}(x)$ holds for all $c'_{i}$.

\begin{example}
	Suppose the cost function of an agent on four items is $c_{1} = (1,2,3,4)$.
%	Consider a four-item case with an agent's costs being $c_{1} = (1,1,2,4)$.
	In an SP cardinal algorithm, reporting $c_{1}$ minimizes her cost (in expectation, for randomized algorithm and the same for the following cases);
	In an SP ordinal algorithm, reporting $c_{14}\geq c_{13} \geq c_{12} \geq c_{11}$ minimizes her cost;
	In an SP public ranking algorithm, the algorithm knows $c_{14}\geq c_{13} \geq c_{12} \geq c_{11}$ by default, and the agent minimizes her cost by reporting $c_{1}$. 
\end{example}

%\begin{example}
%	Consider two agents with costs $c_1 = (\frac{1}{6},\frac{1}{6},\frac{1}{6},\frac{1}{2})$ and $c_2 = (\frac{1}{4},\frac{1}{4},\frac{1}{4},\frac{1}{4})$ on four items.
%	By definition $\MMS_1 = \MMS_2 = 0.5$.
%	Thus allocation $X_1 = \{2,3\}$ and $X_2 = \{1,4\}$ is an MMS allocation.
%	In the cardinal model, the strategic agents are asked to report their costs $c'_1$ and $c'_2$ (which can be difference from $c_1$ and $c_2$).
%	an SP algorithm (which is unaware of $c_1$ and $c_2$) assigns items based on $c'_1$ and $c'_2$, and guarantees that the cost of an agent is minimized when she reports the true cost.
%\end{example}

By the above definition, we have the following lemma immediately, which also appeared in \citet{ABM16a}.

\begin{lemma}\label{lem:model:imply}
	An SP $\alpha$-approximation algorithm for the ordinal model is also SP $\alpha$-approximate for the cardinal model.
	An SP $\alpha$-approximation algorithm for the cardinal model is also SP $\alpha$-approximate for the public ranking model.
%(i) Any SP $\alpha$-approximation cardinal algorithm is automatically 
%an $\alpha$-approximation public ranking algorithm.
%
%(ii) Any SP $\alpha$-approximation ordinal algorithm is automatically 
%an $\alpha$-approximation cardinal algorithm.
\end{lemma}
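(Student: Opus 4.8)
The plan is to treat each of the two implications as a straightforward \emph{simulation}: an algorithm tailored to a model that elicits less information can be executed verbatim in a model that elicits (weakly) more information, and neither the strategyproofness nor the $\alpha$-MMS guarantee is lost in the process. Concretely, an ordinal algorithm is already a cardinal algorithm that happens to ignore the magnitudes of the reported costs, and a cardinal algorithm can be run in the public ranking model simply by ignoring the free advance knowledge of the rankings. The two properties to track are: (i) the $\alpha$-MMS guarantee, which in all three models is defined against the agents' underlying cardinal cost functions, so it transfers as soon as the algorithm's input--output behaviour is matched; and (ii) strategyproofness, whose defining quantifier ``for all misreports $c_i'$'' ranges over an effective report set that only shrinks as we move to a more informative model, so the condition only becomes easier to satisfy. (This is essentially the argument of \citet{ABM16a} for goods.)

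For the first implication, let $\mathcal{M}$ be an SP $\alpha$-approximate ordinal algorithm. View it as the cardinal algorithm that, on input $(c_1,\dots,c_n)$, first extracts the induced rankings $(\sigma_1,\dots,\sigma_n)$ (breaking ties by a fixed rule) and then returns $\mathcal{M}(\sigma_1,\dots,\sigma_n)$. Since $\mathcal{M}$ outputs an $\alpha$-MMS allocation for \emph{every} ranking profile, it does so in particular for the ranking profile induced by any cardinal profile, and the MMS values are computed from the same cardinal costs in both readings; hence the cardinal version is $\alpha$-MMS. For strategyproofness, fix agent $i$ with true cost $c_i$ and fix $c_{-i}$. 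Any deviation $c_i'$ influences the output only through the ranking $\sigma_i'$ it induces, that is, $\mathcal{M}(c_i',c_{-i})=\mathcal{M}(\sigma_i',\sigma_{-i})$, while the truthful report induces $\sigma_i$; ordinal strategyproofness gives $c_i(\mathcal{M}(\sigma_i,\sigma_{-i}))\le c_i(\mathcal{M}(\sigma_i',\sigma_{-i}))$ for all $\sigma_i'$, hence $c_i(\mathcal{M}(c_i,c_{-i}))\le c_i(\mathcal{M}(c_i',c_{-i}))$ for all $c_i'$, which is exactly cardinal strategyproofness.

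For the second implication, let $\mathcal{M}$ be an SP $\alpha$-approximate cardinal algorithm. In the public ranking model the rankings $(\sigma_1,\dots,\sigma_n)$ are known and agent $i$'s report must be a cost vector consistent with $\sigma_i$; run $\mathcal{M}$ on these reports, discarding the a priori knowledge of the rankings. When everyone reports truthfully, the reports are the true cost vectors, so the output is $\alpha$-MMS by the cardinal guarantee. The admissible deviations of agent $i$ --- cost vectors consistent with the fixed $\sigma_i$ --- form a subset of all of $\mathbb{R}^m$, so the best-response property guaranteed by cardinal strategyproofness against every deviation in $\mathbb{R}^m$ holds in particular against every admissible deviation, yielding strategyproofness in the public ranking model. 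The randomized statements follow by the same arguments with costs replaced by expected costs throughout.

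I do not expect a genuine obstacle: the lemma is a ``containment of elicitation models'' observation, and the one-line calculations above make it precise. The only points that need care are the two already flagged --- that $(\alpha$-$)$MMS is always measured with respect to the agents' true cardinal costs (so that matching the allocation suffices), and that the effective deviation space only shrinks as we move up the hierarchy (any cardinal deviation collapses to an ordinal deviation, and public-ranking deviations are a subset of cardinal deviations), so that the universally quantified SP condition is preserved.
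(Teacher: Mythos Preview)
Your argument is correct and is precisely the simulation/containment reasoning one expects here. The paper itself does not give a proof of this lemma: it merely states that the claim follows immediately from the definitions and refers to \citet{ABM16a}. Your write-up is thus a faithful expansion of what the paper leaves implicit, and there is nothing to compare or correct.
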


%\paragraph{Monotonicity.}
We end this section by providing a necessary condition of all SP algorithms for cardinal and public ranking models, which is mainly used to prove our hardness results.

\begin{definition}
	An allocation algorithm $\mathcal{M}$ is {\em monotone} if for any cost functions $c_{1}, \cdots, c_{n}$ 
	and $x=\mathcal{M}(c_{1}, \cdots, c_{n})$,
	increasing $c_{ij}$ for some $x_{ij}=0$, or decreasing $c_{ij}$ for some $x_{ij} = 1$ does not change $x_i$.
\end{definition}

First, by perturbing the costs by arbitrarily small different values, we can assume without loss of generality that 
the cost $c_i(S)$ of agent $i$ is different for every $S\subsetneq M$.

\begin{lemma}\label{lemma:monotone}
	All SP algorithms are monotone.
\end{lemma}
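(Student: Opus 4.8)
The plan is to prove the statement directly by contradiction: assume $\mathcal{M}$ is SP but not monotone, fix cost functions $c_1,\dots,c_n$ with $x=\mathcal{M}(c_1,\dots,c_n)$ that witness the failure, and split into the two ways monotonicity can break. Either (i) $x_{ij}=0$ and raising $c_{ij}$ to some $c'_{ij}>c_{ij}$ changes agent $i$'s bundle, or (ii) $x_{ij}=1$ and lowering $c_{ij}$ to some $0\le c'_{ij}<c_{ij}$ changes agent $i$'s bundle. In both cases let $c'_i$ be the modified cost function (identical to $c_i$ outside coordinate $j$), put $x'=\mathcal{M}(c'_i,c_{-i})$, and assume $X'_i\neq X_i$; the goal is to contradict the genericity assumption stated just before the lemma.

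The key idea is to invoke strategyproofness \emph{in both directions}. Applying SP with $c_i$ as agent $i$'s true type (so that reporting $c'_i$ cannot help) gives $c_i(X_i)\le c_i(X'_i)$; applying SP with $c'_i$ as the true type (so that reporting $c_i$ cannot help) gives $c'_i(X'_i)\le c'_i(X_i)$. Since $c_i$ and $c'_i$ agree off coordinate $j$, the cost of any fixed bundle $S$ changes by exactly $(c'_{ij}-c_{ij})$ if $j\in S$ and by $0$ otherwise when we replace $c_i$ by $c'_i$. This single observation, combined with the two inequalities, will first pin down whether $j\in X'_i$ and then force $c_i(X_i)=c_i(X'_i)$.

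Concretely, in case (i) we have $j\notin X_i$, so $c'_i(X_i)=c_i(X_i)$, and chaining the two SP inequalities gives $c'_i(X'_i)\le c_i(X'_i)$; but replacing $c_i$ by $c'_i$ can only weakly increase the cost of $X'_i$, so $c'_i(X'_i)=c_i(X'_i)$, which (as $c'_{ij}>c_{ij}$) means $j\notin X'_i$, and then all inequalities are tight, yielding $c_i(X_i)=c_i(X'_i)$. In case (ii) we have $j\in X_i$, hence $c'_i(X_i)<c_i(X_i)$; together with the two SP inequalities and $c_i(X'_i)\ge c'_i(X'_i)$ this forces $c'_i(X'_i)<c_i(X'_i)$, i.e.\ $j\in X'_i$, and substituting $c'_i(X_i)=c_i(X_i)-\delta$ and $c'_i(X'_i)=c_i(X'_i)-\delta$ with $\delta=c_{ij}-c'_{ij}>0$ into $c'_i(X'_i)\le c'_i(X_i)$ once more yields $c_i(X_i)=c_i(X'_i)$. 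In either case $X_i\neq X'_i$ but $c_i(X_i)=c_i(X'_i)$, contradicting genericity of the (perturbed) profile; for the corner case where one of these bundles is $M$, I would additionally strengthen the perturbation so that every $c_{ij}$ is strictly positive, making $c_i(M)$ exceed the cost of every proper subset, so that the same conclusion applies.

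I expect the main obstacle to be purely the bookkeeping: keeping track of which of the inequalities are strict versus weak in each case, correctly accounting for the sign of $c'_{ij}-c_{ij}$ when moving costs of bundles that do or do not contain $j$, and disposing of the $M$ corner case. The conceptual content — SP pins the allocated-bundle cost down from both sides while a one-coordinate perturbation moves bundle costs by a single controlled amount — is short, and once the two-sided SP inequalities are written down the contradiction drops out of a few lines of arithmetic.
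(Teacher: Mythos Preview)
Your proposal is correct and follows essentially the same route as the paper: invoke strategyproofness from both sides (true type $c_i$ and true type $c'_i$), use the fact that the two profiles differ in a single coordinate to control how bundle costs shift, deduce first that $j$ has the same membership status in $X_i$ and $X'_i$ and then that $c_i(X_i)=c_i(X'_i)$, and finish via the genericity assumption. Your explicit handling of the corner case $X_i=M$ or $X'_i=M$ is in fact slightly more careful than the paper's own proof, which silently relies on the genericity assumption stated only for proper subsets.
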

\begin{proof}
	 Fix any agent $i$ and let $x$ be the allocation when $i$ reports $c_{i}$ and the others report $c_{-i}$.
 	We fist consider the case when $x_{ij}=0$ and $c_{ij}$ is increased.
 	Let $c'$ be the new cost profile, and $x'$ be the new allocation.
 	If $x'_{ij} = 1$, then if $c'_i(x'_{i}) \leq c_i(x_{i})$, then agent $i$ has incentive to lie when its true cost is $c$ (since $c_i(x'_i) < c'_i(x'_i)$); if $c'_i(x'_i) > c_i(x_i)$, then agent $i$ has incentive to lie when its true cost is $c'$.
 	Hence we have $x'_{ij} = 0$.
 	For the same reason, we should have $c_i(x'_i) = c_i(x_i)$, which implies $x'_i = x_i$.

 	Next, we consider the case when $x_{ij}=1$ and $c_{ij}$ is decreased.
 	If $x'_{ij} = 0$, then if $c'_i(x'_i) = c_i(x'_i) < c_i(x_i)$, then agent $i$ has incentive to lie when its true cost is $c$; if $c'_i(x'_i) = c_i(x'_i) \geq c_i(x_i)$, then agent $i$ has incentive to lie when its true cost is $c'$ (since $c'_i(x_i) < c_i(x_i)$).
 	Hence we have $x'_{ij} = 1$.
 	We further have $c_i(x'_i)-c_{ij} = c_i(x_i)-c_{ij}$
 	as otherwise agent $i$ has incentive to lie when its true cost is the one that results in a higher cost. Hence we have $x'_i = x_i$.
\end{proof}

\section{Ordinal Model: Deterministic Algorithms} \label{sec:ord:det}

Before we present our algorithm for the ordinal model,
we first discuss the limitation of deterministic ordinal algorithms.
%The following lemma implies that using only the rankings on items, no deterministic algorithm can do better than $\frac{4}{3}$-approximate.

\begin{lemma}\label{thm:4/3-hardness}
	No deterministic ordinal algorithm (even non-SP) has an approximation ratio smaller than $\frac{4}{3}$, even for $2$ agents and $4$ items.
\end{lemma}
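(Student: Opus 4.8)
The plan is to exhibit a fixed ordinal profile for $n=2$ and $m=4$ on which every deterministic ordinal algorithm must output an allocation that is a $\tfrac43$-approximation on some agent, by an adversary argument over the finitely many possible outputs. Concretely, I would take the two agents to have \emph{opposite} rankings: agent $1$ ranks the items (from most to least costly) as $a \succ b \succ c \succ d$, and agent $2$ ranks them $d \succ c \succ b \succ a$. Since an ordinal algorithm sees only these rankings, its output $x$ is a fixed partition $(X_1, X_2)$ of $\{a,b,c,d\}$, independent of the underlying cardinal costs (as long as the costs are consistent with these rankings).

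Next I would case-split on the size profile of $(X_1,X_2)$. If some agent receives at least $3$ items, I pick cardinal costs consistent with that agent's ranking that make those $3$ items jointly very expensive while a balanced $2$-partition is cheap; e.g. all four items have cost roughly $1$, so $\MMS = 2$ for that agent, but $3$ items cost about $3 = \tfrac32\,\MMS$, already worse than $\tfrac43$. So the interesting case is the balanced one, $|X_1| = |X_2| = 2$. Here I would use the opposite-rankings structure: whichever way the algorithm splits into two pairs, I can choose costs so that one agent's assigned pair contains her two \emph{most} costly items. The canonical choice is to make three items have cost $1$ and one item have a tiny cost $\epsilon$ for each agent (with the cheap item being $d$ for agent $1$ and $a$ for agent $2$, as their rankings dictate). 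Then $\MMS_i$ is achieved by pairing the cheap item with a costly one, giving $\MMS_i = 1 + \epsilon$ wait — I need $\max$ over bundles, so the balanced partition $\{$cheap, costly$\}, \{$costly, costly$\}$ gives $\max = 2$; I instead want costs where the optimal partition is genuinely balanced at value close to $\tfrac{3}{2}$ of what the algorithm is forced to give, so I would tune to something like costs $(2,2,2,\epsilon)$ per agent, giving $\MMS_i = 3$ (partition $\{2,\epsilon\},\{2,2\}$... that's $4$) — the exact numbers need care, but the template is: pick the instance of~\citet{ABM16a}-style tightness where $\MMS_i = 3$ via a partition into bundles of cost $3$, while any pair of the three ``heavy'' items costs $4$.

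The crux is the $2{+}2$ case, and the key obstacle is arranging a single pair of cardinal cost vectors, consistent with the two fixed opposite rankings, such that \emph{no matter which} of the three balanced partitions $\{\{a,b\},\{c,d\}\}$, $\{\{a,c\},\{b,d\}\}$, $\{\{a,d\},\{b,c\}\}$ the algorithm picks, at least one agent gets a bundle of cost $\geq \tfrac43 \MMS_i$. A clean way to do this: give agent $1$ costs $(1,1,1,0)$ on $(a,b,c,d)$ and agent $2$ costs $(0,1,1,1)$ on $(a,b,c,d)$ (then perturb infinitesimally to break ties as the excerpt permits). Then $\MMS_1 = 1$ via $\{a\},\{b\},\{c,d\}$? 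No — $n=2$, so $\MMS_1 = \min$ over $2$-partitions of the max bundle cost: best is $\{a,b\}$ vs $\{c,d\}$ giving $\max(2,1)=2$, or $\{a\}$ vs ... no, all items must be allocated into exactly $2$ bundles, so $\MMS_1 = 2$ likewise $\MMS_2 = 2$. Any balanced partition gives each agent a bundle of cost either $1$, $2$, or $3$ (in units); the partition must give \emph{some} agent a bundle of cost $\geq 2$... I need to re-scale so that the forced cost is $\geq \tfrac43 \cdot 2$ — that needs a bundle of cost $\geq \tfrac{8}{3}$, impossible with only $3$ unit items and total $3$. So I would instead use $m=4$ with costs like $(2,2,2,1)$: then $\MMS_i$ via $\{2,1\},\{2,2\} \to 4$, or $\{2,2\},\{2,1\}\to 4$; the min-max is $4$, but the algorithm can be forced onto $\{2,2\},\{2,1\}$ which is exactly $\MMS$, no gain. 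The honest fix, and what I'd actually write, is the standard one: use costs so that $\MMS_i = 3$ while the only partitions achieving it are ``unbalanced in the right way,'' and the three symmetric heavy items force cost $4$ whenever two of them land together; with opposite rankings the algorithm cannot avoid putting two heavy items of \emph{some} agent together in a $2{+}2$ split of four items where each agent has exactly one light item — indeed in any $2{+}2$ split, each part has two items, and the two light items ($d$ for agent $1$, $a$ for agent $2$) are distinct, so at least one part contains zero light items for agent $1$ or zero for agent $2$; pushing this counting through is the main technical step, and then a symmetric-renaming argument reduces the three cases to one. I expect the counting/casework to be the only real work; everything else is routine tuning of the cardinal witnesses and the infinitesimal tie-breaking already sanctioned in the text.
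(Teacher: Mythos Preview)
Your approach has a fatal gap: with \emph{opposite} rankings the lower bound simply does not hold. Concretely, on your profile (agent~$1$: $a\succ b\succ c\succ d$; agent~$2$: $d\succ c\succ b\succ a$), consider the deterministic ordinal rule that outputs $X_1=\{c,d\}$ and $X_2=\{a,b\}$. For any cardinal costs consistent with the rankings, agent~$1$ receives her two cheapest items, so $c_1(X_1)\le \tfrac12 c_1(M)\le \MMS_1$; symmetrically $c_2(X_2)\le \MMS_2$. Hence this algorithm is an exact MMS algorithm on this ordinal profile, and no adversary over cardinal witnesses can push its ratio above~$1$. Your casework in the $2{+}2$ regime could therefore never have closed: one of the six balanced allocations is always safe for both agents, and you would have been unable to find a witness against it. (This is precisely the allocation your ``light-item'' counting argument would have had to handle, since here \emph{both} parts contain exactly one light item for each agent.)

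The paper avoids this by taking the two agents' rankings to be \emph{identical}. Then, letting item~$1$ be the common most costly item and assuming without loss of generality $x_{11}=1$, the argument is a two-line dichotomy: if agent~$1$ receives only item~$1$, set $c_2=(1,1,1,1)$ so that agent~$2$ has cost $3$ against $\MMS_2=2$ (ratio $\tfrac32$); otherwise agent~$1$ receives at least two items, and setting $c_1=(3,1,1,1)$ gives her cost at least $4$ against $\MMS_1=3$ (ratio $\tfrac43$). Identical rankings are what create the unavoidable conflict; opposite rankings let the algorithm separate the agents' expensive items perfectly. Beyond this structural issue, your write-up is exploratory rather than a proof: several candidate cost vectors are tried and discarded inline, and the argument never converges.
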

\begin{proof}
	Consider the instance with $2$ agents, whose ranking on the $m=4$ items are identical. 
	Without loss of generality, assume the item with maximum cost is given to the first agent, i.e. $x_{11} = 1$.	
	If the first agent is allocated only one item, then for the case when $c_2 = (1,1,1,1)$, 
	the approximation ratio is $\frac{3}{2}$: the second agent has total cost $3$ while $\MMS_{2}=2$.
	Otherwise for the case when $c_1 = (3,1,1,1)$, the approximation ratio is at least $\frac{4}{3}$, as the first agent has total cost at least $3+1 = 4$ while $\MMS_{1}=3$.
\end{proof}

%The lower bound of $\frac{4}{3}$ holds for any mechanism which may not be strategy-proof.

Next we present a deterministic sequential picking algorithm that is $O(\log\frac{m}{n})$-approximate and SP.
%\subsection{A Sequential Picking Algorithm}
\citet{ABM16a} gave a deterministic SP ordinal algorithm which is $O(m-n)$-approximate when the items are goods.
In the following, we show that if all the items are chores, it is possible to improve the bound to $O(\log \frac{m}{n})$.
Without loss of generality, we assume that $n$ and $\frac{m}{n}$ are at least some sufficiently large constant.
As otherwise it is trivial to obtain an $O(1)$-approximation by assigning $\frac{m}{n}$ arbitrary items to each agents.

\begin{theorem}\label{th:strategyproof-ordinal}
	There exists a deterministic SP ordinal algorithm with approximation ratio $O(\log \frac{m}{n})$.
\end{theorem}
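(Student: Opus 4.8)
The plan is to design a sequential picking algorithm $\SP$ that is specified by a sequence $(a_1,\dots,a_n)$ of positive integers summing to $m$, where agent $i$ (in a fixed order) picks her $a_i$ least-costly items among those remaining and leaves. Strategyproofness is immediate: agent $i$'s choice set is determined entirely by $a_1,\dots,a_{i-1}$ (hence independent of $i$'s report in the ordinal model), and given that fixed set, picking the $a_i$ cheapest items is a dominant strategy. So the whole burden is on choosing the $a_i$'s to control the approximation ratio.

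The key quantitative idea is that when agent $i$ gets to pick, at most $\sum_{j<i} a_j$ items have been removed, so among the $m - \sum_{j<i}a_j$ remaining items agent $i$ can choose the $a_i$ cheapest; by an averaging/Markov-type argument the cost she incurs is at most roughly $\frac{a_i}{m - \sum_{j<i}a_j}\cdot c_i(M) \le \frac{a_i}{m - \sum_{j<i}a_j}\cdot n\,\MMS_i$, using Lemma~\ref{lem:mms:bound}. (One must be slightly careful: the bound should compare the $a_i$ cheapest of the remaining items to the average over the remaining items, which is at most the average over all $m$ items only after accounting for which items were removed; the clean statement is that the $a_i$ cheapest remaining items have total cost at most $\frac{a_i}{m-\sum_{j<i}a_j}$ times the sum of the remaining items, which is $\le c_i(M)$.) Setting $t_i = m - \sum_{j<i}a_j$ (so $t_1 = m$, $t_{i+1} = t_i - a_i$), agent $i$'s cost is at most $\frac{n\,a_i}{t_i}\MMS_i = n\cdot\frac{t_i - t_{i+1}}{t_i}\MMS_i$, and we want $\max_i \frac{a_i}{t_i} = O\!\big(\tfrac{1}{n}\log\tfrac{m}{n}\big)$.

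The heart of the argument is the choice of the decreasing sequence $t_1 = m > t_2 > \cdots > t_{n+1} = 0$ with integer gaps $a_i = t_i - t_{i+1} \ge 1$ that minimizes $\max_i (t_i - t_{i+1})/t_i$. Geometric decay is the natural choice: pick a ratio $r<1$ so that $t_i \approx m\, r^{i-1}$ as long as this exceeds $\Theta(n)$, which happens for about $\log_{1/r}(m/n)$ steps, during which $a_i/t_i \approx 1-r$; then for the remaining $\Theta(n)$ steps we have $t_i = \Theta(n)$ and we can afford $a_i = \Theta(1)$, giving $a_i/t_i = O(1/n)$. Balancing, we take $1 - r$ of order $\frac{1}{n}\log\frac{m}{n}$ (equivalently, use about $n/2$ steps of geometric decay from $m$ down to $\Theta(n)$ and the remaining $n/2$ steps to hand out the last $\Theta(n)$ items one or two at a time). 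This yields $a_i/t_i = O\!\big(\tfrac{1}{n}\log\tfrac{m}{n}\big)$ uniformly, hence $c_i(X_i) = O(\log\tfrac{m}{n})\MMS_i$ for every $i$. The technical care needed is (i) rounding the real-valued $t_i$ to integers while keeping each $a_i \ge 1$ and $\sum a_i = m$, which is where the assumption that $n$ and $m/n$ are large constants is used, and (ii) the averaging bound itself — that the $a_i$ cheapest of $t_i$ items cost at most an $a_i/t_i$ fraction of their total.

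The main obstacle is the second averaging step: the cleanest bound $c_i(\text{$a_i$ cheapest of remaining } t_i) \le \frac{a_i}{t_i} c_i(M)$ is true because the $a_i$ cheapest of any $t_i$-subset cost at most $\frac{a_i}{t_i}$ of that subset's total, which is $\le c_i(M)$; combined with $c_i(M) \le n\,\MMS_i$ this is exactly what we need, so in fact this step is routine once stated correctly. The real work is the optimization/construction of the $a_i$'s and the verification that the geometric-then-uniform schedule gives the claimed $O(\log\frac{m}{n})$ bound with all integrality constraints satisfied.
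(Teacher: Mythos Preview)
Your approach is essentially the paper's: the same sequential-picking algorithm $\SP$ with a geometric schedule of $a_i$'s. Your per-agent bound is slightly different from the paper's --- you bound the cost of the $a_i$ cheapest of the $t_i$ remaining items by $\tfrac{a_i}{t_i}\,c_i(M)\le \tfrac{n a_i}{t_i}\,\MMS_i$, whereas the paper lower-bounds $\MMS_i$ by a pigeonhole argument on the $t_i-a_i$ items that remain \emph{after} agent $i$ picks (each of which costs at least the average of what $i$ took). Both routes are valid and lead to the same geometric optimization, so the overall strategy matches.

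There is, however, a genuine gap in your handling of the tail phase. You write that for the last $\Theta(n)$ agents ``we have $t_i=\Theta(n)$ and we can afford $a_i=\Theta(1)$, giving $a_i/t_i=O(1/n)$.'' But $t_i$ strictly decreases by $a_i\ge 1$ at every step, so over $\Theta(n)$ steps it drops from $\Theta(n)$ all the way to $0$; in particular $t_n=a_n=O(1)$, and your bound $n a_i/t_i$ is $\Theta(n)$ for the last agent, not $O(1)$. The averaging argument simply does not control these final agents. The fix --- which the paper makes explicit --- is to abandon the averaging bound for the tail and instead invoke the second item of Lemma~\ref{lem:mms:bound}: each tail agent receives only $O(1)$ items, and every single item costs at most $\MMS_i$, so her bundle costs $O(1)\cdot\MMS_i$ directly, which is well within $O(\log\tfrac{m}{n})$. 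With that patch your argument goes through.
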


%Observe that for the cardinal model we can always let agents report their complete information while we only use the rankings to assign the items.
%Hence any SP mechanism for the ordinal model immediately implies an SP mechanism for the cardinal model, with the same approximation ratio.
%
%\begin{corollary}\label{th:strategyproof-cardinal}
%	For the cardinal model, there exists a SP algorithm with approximation ratio $O(\log \frac{m}{n})$.
%\end{corollary}

\paragraph{$\SP$.}
Fix a sequence of integers $a_1, \ldots, a_n$ such that $\sum_{i\leq n}a_i = m$.
Order the agents arbitrarily. For $i=n,n-1,\ldots,1$, let agent $i$ pick $a_i$ items from the remaining items.

\medskip

We note that as long as $a_i$'s do not depend on the valuations of agents, the rule discussed above is the serial dictatorship rule for multi-unit demands. 
When it is agent $i$'s turn to pick items, it is easy to see that her optimal strategy is to pick the top-$a_i$ items with smallest cost, among the remaining items.
Hence immediately we have the following lemma.

\begin{lemma}
	For any $\{a_i\}_{i\leq n}$, $\SP$ is SP. % for the ordinal model.
\end{lemma}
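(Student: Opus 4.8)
The plan is to prove strategyproofness agent by agent: I would fix an arbitrary agent $i$ together with the reports $c_{-i}$ of everyone else, and argue that reporting her true ranking is a weakly dominant strategy no matter how the fixed quantities $\{a_i\}_{i\le n}$ are chosen. The key structural fact I would establish first is that the picking order and the numbers $a_1,\dots,a_n$ are fixed in advance and oblivious to all reported preferences. Because agents pick in the order $n,n-1,\dots,1$, everyone who moves before agent $i$ --- namely agents $n,\dots,i+1$ --- chooses purely as a function of their own reports and of the items still available at their turn, and none of these choices depend on agent $i$'s report. Hence the set $R_i$ of items remaining when agent $i$'s turn arrives, which has exactly $\sum_{k\le i}a_k\ge a_i$ items, is completely determined by $c_{-i}$ and is invariant to agent $i$'s report.

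Granting this, the only thing agent $i$'s report controls is which $a_i$-element subset of the fixed set $R_i$ she ends up with. Writing $X_i\subseteq R_i$ for her bundle, her true cost is $c_i(X_i)=\sum_{j\in X_i}c_{ij}$, and over all $a_i$-element subsets of $R_i$ this sum is minimized, by a one-line exchange argument, exactly by taking the $a_i$ items of $R_i$ of smallest true cost (swapping any omitted cheap item for a chosen more expensive one in $R_i$ strictly decreases the total). I would then verify that truthful reporting realizes this optimum: in the ordinal model the algorithm takes from $R_i$ the $a_i$ items ranked cheapest under agent $i$'s reported ranking, and this selection depends only on the relative order among items of $R_i$, so when agent $i$ reports her true ranking $\sigma_i$ the algorithm picks precisely the $a_i$ cheapest items of $R_i$. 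Therefore no misreport can strictly lower her true cost, which is the definition of SP; since $i$ and $\{a_i\}_{i\le n}$ were arbitrary, $\SP$ is SP.

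The only genuinely load-bearing step is the first one --- the independence of the remaining set $R_i$ from agent $i$'s own report --- and I expect the main obstacle to be stating this cleanly rather than proving it, since it follows directly from the order being fixed and the $a_i$'s being report-oblivious. The remaining exchange argument is routine, and the tie-breaking ambiguity it might otherwise raise is already dispatched by the earlier assumption (obtained by an arbitrarily small perturbation) that $c_i(S)$ is distinct for all $S\subsetneq M$, so that the $a_i$ cheapest items of $R_i$ are uniquely determined.
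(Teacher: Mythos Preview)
Your proposal is correct and follows exactly the same approach as the paper, which simply observes that $\SP$ is a serial dictatorship with fixed, report-oblivious quotas $a_i$, so that each agent's optimal strategy at her turn is to pick her $a_i$ cheapest remaining items. You have merely spelled out in detail the two ingredients the paper leaves implicit --- that the remaining set $R_i$ is independent of agent $i$'s report, and that truthful reporting realizes the cost-minimizing $a_i$-subset of $R_i$.
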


It remains to prove the approximation ratio.

\begin{lemma}\label{lem:ord:ai}
	There exists a sequence $\{a_i\}_{i\leq n}$ such that the approximation ratio of $\SP$ is $O(\log \frac{m}{n})$.
\end{lemma}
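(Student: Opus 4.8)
The plan is to choose the $a_i$'s so that, under the agents' cost‑minimizing play, every agent ends up with total cost $O(\log\frac mn)\cdot\MMS_i$. Write $r_i=\sum_{j\le i}a_j$, so $r_0=0$, $r_n=m$, and (since the agents act in the order $n,n-1,\dots,1$) exactly $r_i$ items remain when it is agent $i$'s turn, of which she rationally takes the $a_i$ cheapest. The first step is a worst‑case bound that is independent of the later agents' reports: letting $f_i(t)$ denote the cost to $i$ of the $t$‑th most expensive item of $M$, for any surviving set $R$ with $|R|=r_i$ the $a_i$ cheapest items of $R$ cost at most $\sum_{t=r_{i-1}+1}^{r_i}f_i(t)$, because the $s$‑th most expensive item of $R$ has rank at least $s$ in all of $M$ and hence cost at most $f_i(s)$. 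So it suffices to pick $0=r_0<r_1<\dots<r_n=m$ with $\sum_{t=r_{i-1}+1}^{r_i}f_i(t)=O(\log\frac mn)\cdot\MMS_i$ for every $i$ and every cost profile.

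The second step is the key estimate $f_i(t)\le \MMS_i/\lceil t/n\rceil$, a strengthening of Lemma~\ref{lem:mms:bound}. Fix agent $i$'s MMS partition $\langle X_1,\dots,X_n\rangle$, each bundle costing at most $\MMS_i$. Among the $t$ most expensive items of $M$, some bundle $X_j$ gets at least $\lceil t/n\rceil$ of them by pigeonhole; these together cost at most $\MMS_i$, so the cheapest of them costs at most $\MMS_i/\lceil t/n\rceil$, and that item is still one of the top $t$ items, hence has cost at least $f_i(t)$. Plugging this in, agent $i$'s cost is at most $\MMS_i\cdot W_i$ where $W_i:=\sum_{t=r_{i-1}+1}^{r_i}1/\lceil t/n\rceil$ no longer depends on the instance. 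Since $\sum_i W_i=\sum_{t=1}^m 1/\lceil t/n\rceil\le n\,H_{\lceil m/n\rceil}=O\!\left(n\log\frac mn\right)$, the task becomes purely combinatorial: partition $\{1,\dots,m\}$ into $n$ consecutive blocks each of $w$‑weight $O(\log\frac mn)$, where $w(t)=1/\lceil t/n\rceil\le 1$.

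Finally I would build the blocks greedily with a threshold $B=\Theta(\log\frac mn)$: starting from $r_0=0$, let each $r_i$ be as large as possible subject to $W_i\le B$ (a single item always fits since $w\le 1\le B$). Every block but the last then has weight $>B-1$, so the number of blocks is at most $1+\frac{\sum_t w(t)}{B-1}$, which is at most $n$ once $B$ is a sufficiently large constant times $\log\frac mn$ — and this is exactly where $m>n$ (making the harmonic sum $\Theta(n\log\frac mn)$) is used. If fewer than $n$ blocks result, split blocks further or pad with empty picks $a_i=0$; splitting only decreases weights. Taking the $a_i$ to be the block sizes then yields $c_i(X_i)\le B\cdot\MMS_i=O(\log\frac mn)\cdot\MMS_i$ for all $i$ via the previous paragraph, establishing the ratio. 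I expect the main obstacle to be the estimate $f_i(t)\le \MMS_i/\lceil t/n\rceil$ together with verifying that the greedy partition uses only $n$ blocks; once those are in hand the remainder is bookkeeping, and this harmonic decay of the "worst tail'' is precisely what has no analogue in the goods setting.
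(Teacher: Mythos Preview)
Your argument is correct, but it is genuinely different from the paper's. The paper bounds the ratio by the coarser quantity $\max_i a_i / \lceil r_{i-1}/n\rceil$ (via the single observation that every one of the $r_{i-1}$ items left after agent $i$ picks is at least as costly as the average of the $a_i$ items she takes, so some MMS bundle contains $\lceil r_{i-1}/n\rceil$ of them), and then \emph{explicitly} sets $a_i=2$ for $i\le n/2$ and $a_i\approx K(1+\tfrac{K}{n})^{i-n/2-1}$ with $K=2\log\tfrac{m}{n}$, verifying by direct calculation that the sequence exhausts $m$ and that $a_i\le K\lceil r_{i-1}/n\rceil$. Your route instead proves the finer per-rank estimate $f_i(t)\le\MMS_i/\lceil t/n\rceil$, reduces the problem to partitioning the harmonic-type weights $w(t)=1/\lceil t/n\rceil$ into $n$ consecutive blocks of weight $O(\log\tfrac{m}{n})$, and builds the blocks greedily. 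The paper's version gives a concrete closed-form sequence and needs only the crudest pigeonhole step; your version is less explicit but more transparent about \emph{why} the bound is logarithmic (it is literally the harmonic sum $\sum_{t\le m}1/\lceil t/n\rceil\le nH_{\lceil m/n\rceil}$), and your lemma $f_i(t)\le\MMS_i/\lceil t/n\rceil$ is a clean standalone strengthening of Lemma~\ref{lem:mms:bound} that may be of independent use.
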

\begin{proof}
	We first establish a lower bound on the approximation ratio in terms of $\{a_i\}_{i\leq n}$. Then we show how to fix the numbers appropriately to get a small ratio.
	Let $r$ be the approximation ratio of the algorithm.
	
	Consider the moment when agent $i$ needs to pick $a_i$ items.
	Recall that at this moment, there are $\sum_{j\leq i} a_j$ items, and the $a_i$ ones with smallest cost will be chosen by agent $i$.
	Let $c$ be the average cost of items agent $i$ picks, i.e., $c_i(x) = c\cdot a_i$.
	On the other hand, each of the $\sum_{j\leq i-1} a_j$ items left has cost at least $c$.
	Thus we have $\MMS_i \geq c \cdot \left\lceil \frac{a_1+\ldots+a_{i-1}}{n} \right\rceil$ and
	\begin{equation*}
	r = \max_{i\in N}\left \{\frac{c_i(X_i)}{\MMS_i} \right\} \leq \max_{i\in N} \left\{\frac{a_i}{\left\lceil \frac{a_1+\ldots+a_{i-1}}{n} \right\rceil} \right\}.
	\end{equation*}
	
	It suffices to compute a sequence of $a_1,\ldots, a_n$ that sum to $m$ and minimizes this ratio.
	Fix $K= 2\log \frac{m}{n}$. Let
	\begin{equation*}
	a_i = \begin{cases}
	2, & i\leq \frac{n}{2}, \\
	\min\{ \textstyle m-\sum_{j<i}a_j, \left\lceil K\cdot (1+\frac{K}{n})^{i-\frac{n}{2}-1} \right\rceil \}, & i> \frac{n}{2}.
	\end{cases}
	\end{equation*}
	
	Note that the first term of the $\min$ is to guarantee we leave enough items for the remaining agents.
	Moreover, truncating $a_i$ is only helpful for minimizing the approximation ratio and thus we only need to consider the case when $a_i$ equals the second term of the $\min$.
	In the following, we show that
	\begin{enumerate}
		\item all items are picked: $\sum_{i\in N}a_i = m$;
		\item for every $i>\frac{n}{2}$: $a_i \leq K\cdot \left\lceil \frac{a_1+\ldots+a_{i-1}}{n} \right\rceil$.
	\end{enumerate}
	
	Note that for $i\leq \frac{n}{2}$, since agent $i$ receives $2$ items, the approximation ratio is trivially guaranteed.
	
	The first statement holds because
	\begin{align*}
	& \textstyle \sum_{i=1}^\frac{n}{2} 2 + \sum_{i=\frac{n}{2}+1}^n \left( K\cdot (1+\frac{K}{n})^{i-\frac{n}{2}-1} \right)\\
	= & \textstyle \sum_{i \leq \frac{n}{2}} \left( K\cdot (1+\frac{K}{n})^{i-1} \right) + n \\
	= & \textstyle (1+\frac{K}{n})^\frac{n}{2}\cdot n - n + n \geq 2^\frac{K}{2}\cdot n^2 > m,
	\end{align*}
	and $a_i$'s will be truncated when their sum exceeds $m$.
	
%	The second statement is trivially true for $i\leq \frac{n}{2}$, or for $a_i = 0$.
	For $i>\frac{n}{2}$, observe that (let $l = i-\frac{n}{2}-1$)
	\begin{align*}
	\textstyle \frac{1}{n}(a_1+\ldots+a_{i-1}) & \textstyle = 1 + \frac{1}{n}\sum_{j = 1}^{l} K\cdot (1+\frac{K}{n})^{j-1} \\
	& \textstyle = 1 + (1+\frac{K}{n})^{l} - 1 = (1+\frac{K}{n})^{l}.
	\end{align*}
	
	Thus we have $a_i \leq \left\lceil K\cdot (1+\frac{K}{n})^{l} \right\rceil \leq K\cdot \left\lceil  (1+\frac{K}{n})^{l} \right\rceil \leq K\cdot \left\lceil \frac{a_1+\ldots+a_{i-1}}{n} \right\rceil$, as claimed.
\end{proof}

We conclude the section by showing that our approximation ratio is asymptotically tight for $\SP$.

\begin{lemma}[Limits of $\SP$]
	The $\SP$ algorithm (with any $\{a_i\}_{i\in N}$) has approximation ratio $\Omega(\log \frac{m}{n})$.
\end{lemma}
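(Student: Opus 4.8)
The plan is to establish a lower bound on the approximation ratio $r$ of $\SP$ that holds for every choice of the sequence $\{a_i\}_{i\in N}$, matching the upper bound of Lemma~\ref{lem:ord:ai}. I would reuse the adversarial-instance idea already present in the analysis: when it is agent $i$'s turn, there are $\sum_{j\le i}a_j$ items remaining, agent $i$ greedily takes the $a_i$ cheapest of them, and the adversary can make all the surviving items after agent $i$'s pick have cost essentially equal to agent $i$'s average item cost $c$. Against such an instance, agent $i$ pays $c\cdot a_i$, while $\MMS_i$ is roughly $c\cdot\lceil(a_1+\ldots+a_{i-1})/n\rceil$, since agent $i$ would distribute the $\sum_{j<i}a_j$ equal-cost leftover items as evenly as possible over the $n$ bundles (one can construct a uniform instance where this is tight). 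Hence $r \ge \max_{i\in N}\frac{a_i}{\lceil(a_1+\ldots+a_{i-1})/n\rceil}$, so it suffices to show that \emph{every} sequence of positive integers $a_1,\ldots,a_n$ summing to $m$ has $\max_i \frac{a_i}{\lceil(a_1+\ldots+a_{i-1})/n\rceil} = \Omega(\log\frac{m}{n})$.

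For that combinatorial claim I would argue by contradiction: suppose $a_i \le K\cdot\lceil\frac{a_1+\ldots+a_{i-1}}{n}\rceil$ for all $i$ with $K = o(\log\frac{m}{n})$ (more precisely $K$ smaller than some constant times $\log\frac mn$). Writing $S_i = a_1+\ldots+a_i$ for the prefix sums, the bound gives $S_i = S_{i-1} + a_i \le S_{i-1} + K\cdot\lceil S_{i-1}/n\rceil \le S_{i-1}(1 + \tfrac{K}{n}) + K$. Unrolling this linear recurrence from $S_0 = 0$ (or from the first index where $S_{i-1}\ge n$, handling the small-prefix regime separately by noting $a_i \le K$ there) yields $S_n \le c_1\cdot n\cdot(1+\tfrac{K}{n})^{n} \le c_1\cdot n\cdot e^{K}$ for an absolute constant $c_1$. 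Since $S_n = m$, this forces $e^{K} \ge \frac{m}{c_1 n}$, i.e. $K \ge \ln\frac{m}{n} - O(1) = \Omega(\log\frac{m}{n})$, contradicting the assumption. Therefore some agent $i$ has $\frac{a_i}{\lceil S_{i-1}/n\rceil} = \Omega(\log\frac mn)$, and the adversarial instance above makes the approximation ratio at least this large.

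The step I expect to require the most care is making the adversarial instance genuinely tight for the chosen agent $i$ while simultaneously respecting the ordinal constraints and keeping $\MMS_i$ as small as claimed for \emph{that same} agent — in particular, ensuring that the cheap items agent $i$ picks and the (equal-cost) expensive items left behind are consistent with a single additive cost vector, and that $\MMS_i$ is not accidentally larger than $c\cdot\lceil S_{i-1}/n\rceil$ because of the items allocated before agent $i$'s turn. One clean way around this is to set every item's cost (in agent $i$'s eyes) equal to $c$, so that agent $i$ is indifferent among all items, whatever order is used agent $i$ ends up with exactly $a_i$ items of cost $c$, and $\MMS_i = c\cdot\lceil m/n\rceil \ge c\cdot\lceil S_{i-1}/n\rceil$ — actually this only weakens the bound, so instead I would give agent $i$ a two-value vector (the $a_i$ later-ranked items cheap, the $S_{i-1}$ earlier-ranked ones of cost $c$, and the remaining $m - S_i$ items cost $0$) and verify $\MMS_i$ from the earliest-ranked block alone. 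I would also remark, as the statement demands, that this $\Omega(\log\frac mn)$ lower bound applies to \emph{any} choice of $\{a_i\}$, which together with Lemma~\ref{lem:ord:ai} pins down the optimal approximation ratio of the $\SP$ family as $\Theta(\log\frac mn)$.
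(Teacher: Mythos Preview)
Your approach is essentially the same as the paper's: assume the algorithm is $K$-approximate for a small $K$, derive a recursive growth bound on the sequence (the paper bounds each $a_i$ by $K(1+\tfrac{2K}{n})^{i-1}$ via induction, you bound the prefix sums $S_i$ by unrolling $S_i \le S_{i-1}(1+\tfrac{K}{n})+K$), sum the geometric series, and conclude $S_n < m$, a contradiction. The two arguments are interchangeable.

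One technical slip worth tightening: the claimed inequality $r \ge a_i/\lceil S_{i-1}/n\rceil$ is slightly too strong, and your two-value instance does not witness it. If the $a_i$ ``cheap'' items have cost close to $0$, agent $i$ pays nearly nothing; if they have cost close to $c$, then agent $i$'s own partition into $n$ bundles must absorb all $S_i$ (not $S_{i-1}$) items of cost $\approx c$, giving $\MMS_i \approx c\lceil S_i/n\rceil$. The clean adversarial instance is therefore: the $m-S_i$ items taken by agents $n,\ldots,i+1$ cost $0$ to agent $i$, and the $S_i$ remaining items each cost $c$; this yields ratio $a_i/\lceil S_i/n\rceil$, which is exactly what the paper (implicitly) uses in its induction and also what gives $a_1\le K$ for the last picker. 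The shift from $S_{i-1}$ to $S_i$ is harmless for the asymptotics: from $a_i \le K\lceil S_i/n\rceil$ and $K\le n/2$ one gets $S_i \le S_{i-1}(1+\tfrac{2K}{n})+2K$, and your unrolling argument goes through unchanged to give $m=S_n \le n\,e^{2K}$, hence $K=\Omega(\log\tfrac{m}{n})$. With that correction your write-up is complete and matches the paper's proof.
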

\begin{proof}
	Fix $K = \frac{1}{4}\log \frac{m}{n}$. Suppose there exists a sequence of $\{a_i\}_{i\in N}$ such that the algorithm is $K$-approximate.
	
	Then the last agent to act must receive at most $K$ items, i.e., $a_1 \leq K$.
	Next we show by induction on $i=2,3,\ldots,n$ that $a_i \leq K(1+\frac{2K}{n})^{i-1}$ for all $i\in N$.
	
	Suppose the statement is true for $a_1,\ldots, a_i$. Then if $a_{i+1} > K(1+\frac{2K}{n})^{i}$, we have
	\begin{equation*}
	\frac{a_{i+1}}{a_1+\ldots+a_{i+1}} > \frac{ K(1+\frac{2K}{n})^{i}}{k\cdot\frac{n}{2K}((1+\frac{2K}{n})^{i+1}-1)} \geq \frac{K}{n}.
	\end{equation*}
	
	Thus we have $\sum_{i=1}^n a_i \leq n\cdot\left( (1+\frac{2K}{n})^n - 1 \right)
	\leq n\cdot \left( e^{2K} - 1 \right) < m$,	which is a contradiction, since not all items are allocated.
\end{proof}

%\paragraph{Bounded Approximation Ratio}
%
%If we need to guarantee an approximation ratio $r$, what is the maximum number of items we can handle?
%\begin{enumerate}
%	\item for $r = 2$, $m = \frac{11}{3}n\approx 3.67 n$;
%	\item for $r = 3$, $m \approx 10.26 n$;
%	\item for $r = 4$, $m \approx 30.15 n$.
%\end{enumerate}

\section{Ordinal Model: Randomized Algorithms} \label{sec:ord:ran}

We have shown a logarithmic approximation algorithm $\SP$ for the problem.
However, the algorithm may still have poor performance when the number of items is much larger than the number of agents, e.g., $m = 2^n$.
In this section we present a randomized $O(\sqrt{\log n})$-approximation ordinal algorithm, 
which is SP in expectation. 

Again, before we show our algorithm, let us first see a limitation of the randomized ordinal algorithms.
%The following lemma implies that even when randomization is allowed, no algorithm can do better than $\frac{6}{5}$-approximate by utilizing only the rankings on the items.

\begin{lemma}\label{thm:ordi:rand:hardness}
	No randomized ordinal algorithm (even non-SP) has approximation ratio smaller than $\frac{6}{5}$, even for $2$ agents and $4$ items.
\end{lemma}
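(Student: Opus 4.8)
The plan is to use an averaging (Yao-type) argument. I will exhibit a distribution $\mathcal{D}$ over instances on $2$ agents and $4$ items, all of which induce the \emph{same} ordinal ranking for both agents, and show that on $\mathcal{D}$ every randomized ordinal algorithm has expected maxmin-ratio at least $\frac{6}{5}$, so it must fail on some instance in the support. The crucial point is that an ordinal algorithm sees only the reported rankings; hence on the support of $\mathcal{D}$ it produces one fixed distribution $\mathcal{P}$ over partitions of the four items, independent of the true costs. It therefore suffices to find $\mathcal{D}$ such that for \emph{every single} fixed partition $x=(x_1,x_2)$ of the items, $\mathbf{E}_{(c_1,c_2)\sim\mathcal{D}}\bigl[\max_{i\in\{1,2\}} c_i(x_i)/\MMS_i\bigr]\ge\frac{6}{5}$; averaging this over $x\sim\mathcal{P}$ and then over $\mathcal{D}$ gives that the algorithm's approximation ratio is at least $\frac{6}{5}$. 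Note this argument never uses strategyproofness, so it applies to all ordinal algorithms.

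\textbf{The distribution.} Both agents rank the four items as $1\succ 2\succ 3\succ 4$ (item $1$ the costliest). Let $\mathcal{D}$ place probability $\frac{3}{5}$ on the profile in which \emph{both} agents have cost vector $(3,1,1,1)$ (so $\MMS_i=3$), and probability $\frac{2}{5}$ on the profile in which both have $(1,1,1,1)$ (so $\MMS_i=2$). To make the induced rankings strict, perturb the equal costs by an arbitrarily small amount; this shifts every quantity by $o(1)$, and since we only need to rule out an approximation ratio \emph{strictly} below $\frac{6}{5}$, this is harmless.

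\textbf{Finite case check.} Since $\mathcal{D}$ is symmetric in the two agents and items $2,3,4$ are interchangeable (equal cost in both profiles), every partition reduces to one of four types: (i) one agent gets all four items; (ii) a $1$–$3$ split with item $1$ alone; (iii) a $1$–$3$ split with item $1$ in the triple; (iv) a $2$–$2$ split. For (i) the max-ratio is $2$ under either profile. For (iii), the agent holding the triple pays $3+1+1=5$ against $\MMS=3$ under $(3,1,1,1)$ and $3$ against $\MMS=2$ under $(1,1,1,1)$, giving $\frac{3}{5}\cdot\frac{5}{3}+\frac{2}{5}\cdot\frac{3}{2}=\frac{8}{5}$. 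The two tight cases are (ii), where both agents pay exactly their $\MMS$ under $(3,1,1,1)$ (ratio $1$) while the three-chore agent pays $3$ against $\MMS=2$ under $(1,1,1,1)$ (ratio $\frac{3}{2}$), giving $\frac{3}{5}\cdot 1+\frac{2}{5}\cdot\frac{3}{2}=\frac{6}{5}$; and (iv), where the agent holding item $1$ pays $3+1=4$ against $\MMS=3$ under $(3,1,1,1)$ (ratio $\frac{4}{3}$) but only $2=\MMS$ under $(1,1,1,1)$ (ratio $1$), giving $\frac{3}{5}\cdot\frac{4}{3}+\frac{2}{5}\cdot 1=\frac{6}{5}$. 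Thus every partition has expected max-ratio at least $\frac{6}{5}$, completing the argument.

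\textbf{Main obstacle.} The delicate part is the opposing pressure between the $1$–$3$ split and the $2$–$2$ split: punishing the former requires mass on the ``flat'' profile $(1,1,1,1)$, so that a three-chore bundle overshoots the maxmin share, whereas punishing the latter requires mass on the ``spiky'' profile $(3,1,1,1)$, so that item $1$ together with any other item overshoots it. A distribution supported on only one of the two profiles lets the algorithm escape (it can hit exactly $\frac{4}{3}$ or nothing, yielding only $\frac{7}{6}$ on a symmetric $\{(3,1,1,1),(1,1,1,1)\}$-type mixture), and the split $\frac{3}{5}:\frac{2}{5}$ is the unique choice that equalizes the two bounds at $\frac{6}{5}$. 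Once this is identified, checking that the remaining partition types do strictly better and discharging the perturbation step are routine.
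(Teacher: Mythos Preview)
Your proof is correct and rests on the same two instances and the same $3/5:2/5$ threshold as the paper, but the framing differs. The paper argues the dual way: it lets $p$ be the probability the algorithm outputs a $2$--$2$ split, and then picks the adversarial instance depending on whether $p\le\frac{3}{5}$ (use $(1,1,1,1)$, giving expected ratio $\ge\frac{3}{2}-\frac{p}{2}$) or $p>\frac{3}{5}$ (use $(3,1,1,1)$, giving expected ratio $\ge 1+\frac{p}{3}$). This avoids your explicit enumeration of all four partition types, since the paper only needs the crude lower bounds ``ratio $\ge\frac{3}{2}$ on any non-$2$--$2$ split under $(1,1,1,1)$'' and ``ratio $\ge 1$ otherwise.'' Your Yao-style averaging is the standard dual of that argument; it is slightly longer but makes the tightness of the $3/5:2/5$ split (which you highlight) more transparent.
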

\begin{proof}
	Consider the instance with $2$ agents, whose ranking on the $m=4$ items are identical. 
	Let $p$ be the probability that the algorithm assigns $2$ items to both agents.
	
	If $p\leq \frac{3}{5}$, consider the instance with evaluation $(1,1,1,1)$, for which $\MMS_1 = \MMS_2 = 2$.
	Then with probability $1-p$, the agent receiving at least $3$ items has cost at least $\frac{3}{2}$ times its maximin share, which implies that the expected approximation ratio is at least $p+(1-p)\cdot\frac{3}{2} = \frac{3}{2} - \frac{p}{2} \geq \frac{6}{5}$.
	
	If $p> \frac{3}{5}$, then consider the instance with evaluation $(3,1,1,1)$, for which $\MMS_1 = \MMS_2 = 3$.
	Then with probability $p$, the agent receiving the item with cost $3$ has cost at least $\frac{4}{3}$ times its maximin share, which implies an expected approximation ratio at least $p\cdot \frac{4}{3}+(1-p) = 1 + \frac{p}{3} \geq \frac{6}{5}$.
\end{proof}

%\subsection{A Randomized Allocation Algorithm}

Basically, if we randomly allocate all the items, one is able to show that the algorithm achieves an approximation of $O(\log n)$.
The drawback of this na{\" i}ve randomized algorithm is that it totally ignores the rankings of agents.
In the following, we show that if the agents have opportunities to decline some ``bad'' items,
the performance of this randomized algorithm improves to $O(\sqrt{\log n})$.
Note that since we already have an $O(\log \frac{m}{n})$-approximate deterministic algorithm for the ordinal model, 
it suffices to consider the case when $m \geq n\log n$.

\paragraph{$\RD$.}
Let $K=\lfloor n\sqrt{\log n} \rfloor$.
Based on the ordering of items submitted by agents, for each agent $i$, 
label the $K$ items with largest cost as ``large'', and the remaining to be ``small''. 
It can be also regarded as each agent reports a set $M_{i}$ of large items with $|M_{i}|=K$.
The algorithm operates in two phases.
\begin{itemize}
	\item Phase 1: every item is allocated to a uniformly-at-random chosen agent, independently.
	After all allocations, gather all the large items assigned to every agent into set $M_b$. Note that $M_{b}$ is also a random set.
	
	\item Phase 2: Redistribute the items in $M_{b}$ evenly to all agents: every agent gets $\frac{|M_b|}{n}$ random items.
\end{itemize}

\begin{theorem}\label{th:strategyproof-ordinal:random}
	There exists a randomized SP ordinal algorithm with approximation ratio $O(\sqrt{\log n})$.
\end{theorem}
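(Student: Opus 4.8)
The plan is to verify the two properties of $\RD$ separately: strategyproofness in expectation, and the $O(\sqrt{\log n})$ approximation guarantee. We may assume throughout that $m\ge n\log n$, since otherwise $\SP$ already gives $O(\log\frac{m}{n})=O(\log\log n)=O(\sqrt{\log n})$; in particular $K=\lfloor n\sqrt{\log n}\rfloor\le m$, so every agent really does have $K$ items to label as large, and we may take $n$ to be large. For strategyproofness, fix an agent $i$ and the reports of everyone else. The only effect of $i$'s ordinal report is which $K$-set $M_i$ it marks as large, and any $K$-set is achievable. Over the internal randomness, Phase~1 sends each item to $i$ independently with probability $\frac1n$, and by symmetry the Phase~2 redistribution sends each item of $M_b$ to $i$ with probability exactly $\frac1n$ (when $n\nmid|M_b|$ one chooses the agents receiving an extra item symmetrically). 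Since an item can reach $i$ in Phase~2 only via an agent $k$ with $j\in M_k$, one gets $\Pr[j\in M_b]=\frac1n\big(\mathbf{1}[j\in M_i]+d_j^{-i}\big)$ with $d_j^{-i}=|\{k\ne i:j\in M_k\}|$ independent of $i$'s report. Writing out $\mathbf{E}[c_i(X_i)]$ over this randomness and collecting terms, everything depending on $M_i$ collapses to the single summand $-\frac{n-1}{n^2}\sum_{j\in M_i}c_{ij}$; as $n\ge2$, this is minimized precisely by letting $M_i$ be the $K$ costliest items, i.e.\ by reporting truthfully. Hence $\RD$ is SP in expectation.

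For the approximation ratio, fix an agent $i$ and scale so that $\MMS_i=1$; by Lemma~\ref{lem:mms:bound} this gives $c_{ij}\le 1$ for all $j$ and $c_i(M)\le n$, and since each small item of $i$ is cheaper than every large item while the $K$ large items sum to at most $c_i(M)\le n$, each small item costs at most $\frac nK\le\frac{2}{\sqrt{\log n}}$. Decompose $c_i(x)=A_i+B_i$, where $A_i$ is the cost of the small items $i$ keeps from Phase~1 and $B_i$ is the cost of the items $i$ receives in Phase~2; since $\max_i(A_i+B_i)\le\max_iA_i+\max_iB_i$ it suffices to bound $\mathbf{E}[\max_iA_i]$ and $\mathbf{E}[\max_iB_i]$ by $O(\sqrt{\log n})$. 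The term $B_i$ is easy: agent $i$ receives at most $\lceil|M_b|/n\rceil$ items in Phase~2, each of cost at most $\MMS_i$, so $\max_iB_i\le\lceil|M_b|/n\rceil$ as a random variable; and $\mathbf{E}|M_b|=\sum_j\Pr[j\in M_b]=\frac1n\sum_k|M_k|=K$, whence $\mathbf{E}[\max_iB_i]\le K/n+1=O(\sqrt{\log n})$.

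The term $A_i=\sum_{j\text{ small}}c_{ij}\xi_{ij}$, with independent $\xi_{ij}\sim\mathrm{Bernoulli}(1/n)$, needs a tail bound. Here $\mathbf{E}[A_i]\le\frac1nc_i(M)\le 1$, every coefficient is at most $b:=\frac nK\le\frac{2}{\sqrt{\log n}}$, and $\mathrm{Var}(A_i)\le\frac1n\sum_j c_{ij}^2\le b$, so Bernstein's inequality gives $\Pr[A_i\ge t]\le\exp\big(-\Omega(t\sqrt{\log n})\big)$ for all $t$ above a fixed constant. Choosing $t_0=\Theta(\sqrt{\log n})$ large enough that $n\exp(-\Omega(t_0\sqrt{\log n}))\le 1/n$, a union bound over the $n$ agents together with the identity $\mathbf{E}[\max_iA_i]\le t_0+\int_{t_0}^{\infty}\Pr[\max_iA_i>t]\,dt$ yields $\mathbf{E}[\max_iA_i]=O(\sqrt{\log n})$. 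Combining, $\mathbf{E}\big[\max_i c_i(x)/\MMS_i\big]\le\mathbf{E}[\max_iA_i]+\mathbf{E}[\max_iB_i]=O(\sqrt{\log n})$, which is the theorem.

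The main obstacle is the tail of $A_i$: a plain Chernoff bound only gives $e^{-\Omega(\sqrt{\log n})}$ at constant $t$, which is far too weak to survive a union bound over $n$ agents (and $A_i$ can in principle be as large as $c_i(M)\le n\MMS_i$). What rescues the argument is that the items feeding $A_i$ are exactly the \emph{small} ones, each of normalized cost $O(1/\sqrt{\log n})$; this tiny per-item weight is precisely what pushes the deviation into the sub-exponential regime $\Pr[A_i\ge t]\le e^{-\Omega(t\sqrt{\log n})}$, and it is also what makes $K=\Theta(n\sqrt{\log n})$ the right threshold, balancing the $\Theta(\sqrt{\log n})$ cost charged in Phase~2 against the $\Theta(\sqrt{\log n})$ contribution of the Phase-1 maxima.
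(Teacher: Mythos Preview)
Your proof is correct and follows essentially the same approach as the paper: the strategyproofness argument is identical (the only freedom in $i$'s report is the choice of the $K$-set $M_i$, and the expected cost is minimized by taking the $K$ costliest items), and the approximation argument hinges on the same two observations, namely that Phase~1 small items each have normalized cost $O(1/\sqrt{\log n})$ so a Chernoff/Bernstein bound survives the union over $n$ agents, and that Phase~2 hands out $O(|M_b|/n)$ items per agent with $\mathbf{E}[|M_b|]=K$. Your treatment of Phase~2 is in fact slightly cleaner than the paper's, since you bound $\mathbf{E}[\max_i B_i]$ directly via $\mathbf{E}[|M_b|]$ and linearity, whereas the paper applies a second Chernoff bound to $|M_b|$ and then handles the failure event with the crude $n\cdot\MMS_i$ fallback.
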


%First, we analyze the approximation ratio of $\RD$.
We prove Theorem~\ref{th:strategyproof-ordinal:random} in the following two lemmas.

\begin{lemma}
	In expectation, the approximation ratio of Algorithm $\RD$ is $O(\sqrt{\log n})$.
\end{lemma}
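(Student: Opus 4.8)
The plan is to split each agent's final cost into its two phases and bound each in expectation. Write $c_i(x) = c_i^{(1)}(x) + c_i^{(2)}(x)$, where $c_i^{(1)}(x)$ is the cost to agent $i$ of the items she keeps from Phase~1 and $c_i^{(2)}(x)$ the cost of the items she receives in Phase~2. The items agent $i$ keeps from Phase~1 are exactly the items assigned to her in Phase~1 that she labelled \emph{small}, since every item assigned to an agent who labelled it \emph{large} is moved into $M_b$. Because $\max_i c_i(x)/\MMS_i \le \max_i c_i^{(1)}(x)/\MMS_i + \max_i c_i^{(2)}(x)/\MMS_i$, it suffices to show each of the two expected maxima is $O(\sqrt{\log n})$. (We may assume $m \ge n\log n$, since the deterministic algorithm already gives an $O(\log\frac mn)=O(\log\log n)$ bound otherwise; this also guarantees $K=\lfloor n\sqrt{\log n}\rfloor\le m$, so the labelling is well defined.)

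For Phase~2 I would use only Lemma~\ref{lem:mms:bound}: every single item costs agent $i$ at most $\MMS_i$, and agent $i$ receives at most $|M_b|/n+1$ items in Phase~2, so $c_i^{(2)}(x)/\MMS_i \le |M_b|/n + 1$ holds for \emph{every} agent $i$ in \emph{every} realization. Hence $\mathbf{E}[\max_i c_i^{(2)}(x)/\MMS_i] \le \mathbf{E}[|M_b|]/n + 1$. An item lands in $M_b$ iff the uniformly random agent it is assigned to in Phase~1 labelled it large, so $\mathbf{E}[|M_b|] = \frac1n\sum_i |M_i| = K$, and the Phase~2 contribution is at most $K/n + 1 \le \sqrt{\log n}+1 = O(\sqrt{\log n})$.

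For Phase~1 the key observation is that every item $j$ that agent $i$ keeps is small for her, so $c_{ij}$ is at most her $K$-th largest cost, which is at most $\frac1K c_i(M) \le \frac nK\MMS_i = O(\MMS_i/\sqrt{\log n})$ (using $\MMS_i \ge \frac1n c_i(M)$ and $K=\lfloor n\sqrt{\log n}\rfloor$). Moreover, since each item is assigned independently, $c_i^{(1)}(x) = \sum_{j \text{ small for } i} c_{ij}\cdot\mathbf{1}[j\to i]$ is a sum of independent non-negative random variables bounded by $b:=O(\MMS_i/\sqrt{\log n})$, with mean $\frac1n\sum_{j\text{ small for }i}c_{ij}\le\frac1n c_i(M)\le\MMS_i$ and variance at most $b\cdot\MMS_i$. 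Feeding the mean and the per-term bound into Bernstein's inequality gives $\Pr[c_i^{(1)}(x)\ge\lambda\sqrt{\log n}\,\MMS_i]\le n^{-\Omega(\lambda)}$ for every constant $\lambda$ — the smallness of $b$ is exactly what pushes the deviation scale from $\MMS_i$ up to $\sqrt{\log n}\,\MMS_i$. Choosing $\lambda$ a large enough constant and taking a union bound over the $n$ agents, with probability $1-o(1)$ every agent has $c_i^{(1)}(x)\le\lambda\sqrt{\log n}\,\MMS_i$; on the complementary $o(1)$-probability event we fall back on the deterministic bound $c_i^{(1)}(x)\le c_i(M)\le n\MMS_i$, whose contribution to the expectation is $o(1)$. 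Thus $\mathbf{E}[\max_i c_i^{(1)}(x)/\MMS_i]=O(\sqrt{\log n})$, and summing the two bounds proves the claim.

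The main obstacle is the Phase~1 estimate. The naive bound (number of kept items)$\times$(largest possible per-item cost) is $\Theta(\frac mn\cdot\frac nK)=\Theta(\frac mK)$, which is useless once $m\gg n\log n$. One genuinely has to exploit that the \emph{total} cost $\sum_j c_{ij}$ is only $n\MMS_i$, so typical kept items are far cheaper than the extreme value $\frac nK\MMS_i$, and combine this (as the mean) with the per-item bound in a Bernstein-type tail inequality, paying one $\sqrt{\log n}$ factor for the union bound over agents. Verifying that $K=\lfloor n\sqrt{\log n}\rfloor$ simultaneously makes the Phase~2 term ($\sim K/n$) and the Phase~1 deviation scale ($\sim\frac nK\log n$ via Bernstein) both $\Theta(\sqrt{\log n})$ is the one quantitative balance that needs care.
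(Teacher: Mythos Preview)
Your argument is correct. Phase~1 is handled essentially as in the paper: both proofs use that every small item costs at most $c_i(M)/K \le (n/K)\,\MMS_i = O(\MMS_i/\sqrt{\log n})$, then apply a Chernoff/Bernstein tail bound (the paper rescales to $[0,1]$ and uses Chernoff, which is the same inequality in disguise), union-bound over the $n$ agents, and absorb the bad event via the crude $c_i(x)\le n\,\MMS_i$ bound.

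Where you genuinely diverge is Phase~2. The paper proves a high-probability bound $|M_b|\le 2n\sqrt{\log n}$ via a second Chernoff argument on the indicators $y_j=\mathbf{1}[j\in M_b]$, then bounds each agent's Phase-2 cost by $(|M_b|/n)\cdot\MMS_i$ on that event. You instead observe that $\max_i c_i^{(2)}(x)/\MMS_i \le |M_b|/n+1$ holds \emph{deterministically}, so linearity of expectation plus $\mathbf{E}[|M_b|]=K$ already gives the $O(\sqrt{\log n})$ bound with no concentration needed. This is a clean simplification: the paper's Chernoff step on $|M_b|$ is unnecessary for the expected-ratio objective, and your route avoids it. (The paper's version does yield the slightly stronger ``with-high-probability every agent is $O(\sqrt{\log n})$-approximate'' statement, but that is not what the lemma asks for.)
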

\begin{proof}
	We show that with probability at least $1-\frac{2}{n}$, every agent $i$ receives a collection of items of cost at most $O(\sqrt{\log n})\cdot \MMS_i$.
	Fix any agent $i$. %Recall her maxmin share $\MMS_{i} \geq \frac{1}{n}$.
	Without loss of generality, we order the items according to agent $i$'s ranking, 
	i.e., $\sigma_{i}(j)=j$ for any $j\in M$ and $c_{i1}\geq \cdots \geq c_{im}$.

	For ease of analysis, we rescale the costs such that
	\begin{equation*}
	c_{i1}+c_{i2}+\ldots+c_{im} = n\sqrt{\log n} = K.
	\end{equation*}
	
	Note that after the scaling, agent $i$'s maximin share is $\MMS_{i}\geq \sqrt{\log n}$.
	Let $x_{ij}$ denote the random variable indicating that the contribution of item $j$ to the cost of agent $i$.
	Then for $j > K$, $x_{ij} = c_{ij}$ with probability $\frac{1}{n}$, and $x_{ij} = 0$ otherwise.
	For $j \leq K$, $x_{ij} = 0$ with probability $1$.
	Note that
	\begin{equation*}
	\textstyle \mathbf{E}[\sum_{i=1}^m x_i ] = \frac{1}{n}\cdot \sum_{i=K+1}^m c_{ij} \leq \frac{K}{n} = \sqrt{\log n}.
	\end{equation*}
	
	Moreover, we have $c_{ij} \leq 1$ for $j > K$, as otherwise we have the contradiction that $\sum_{j=1}^{K}c_{ij} > K$.
	Note that $\{x_{ij}\}_{j\leq m}$ are independent random variables taking value in $[0,1]$.
	Hence by Chernoff bound we have
	\begin{align*}
	& \textstyle \Pr[ \sum_{j=1}^m x_{ij} \geq 7\sqrt{\log n} \cdot \MMS_{i} ] \\
	\leq & \textstyle  \Pr[ \sum_{j=1}^m x_{ij} \geq 7\log n ] \\
%	= & \textstyle  \Pr[\sum_{i=1}^m x_i \geq \frac{7\log n}{\mathbf{E}[\sum_{i=1}^m x_i]} \cdot \mathbf{E}[\sum_{i=1}^m x_i] ]\\
	\leq &\textstyle  \exp\left(-\frac{1}{3}\cdot \left(\frac{7\log n}{\mathbf{E}[\sum_{i=1}^m x_i]}-1\right) \cdot \mathbf{E}[\sum_{i=1}^m x_i] \right) < \frac{1}{n^2}.
	\end{align*}
	
	Then by union bound over the $n$ agents, we conclude that with probability at least $1-\frac{1}{n}$, every agent $i$ receives a bundle of items of cost at most $O(\sqrt{\log n})\cdot \MMS_i$ in phase 1.
	
	Now we consider the items received by an agent in the second phase.
	Recall that the items $M_b$ will be reallocated evenly.
	By the second argument of Lemma \ref{lem:mms:bound}, to show that every agent $i$ receives a bundle of items of cost $O(\sqrt{\log n})\cdot \MMS_i$ in the second phase, it suffices to prove that $|M_b| = O(n\sqrt{\log n})$ (with probability at least $1-\frac{1}{n}$).
	
	Let $y_j\in\{0,1\}$ be the random variable indicating whether item $j$ is contained in $M_b$.
	For every item $j$, let $b_j=|\{k : j \in M_{k}\}|$ be the number of agents that label item $j$ as ``large''.
	Then we have $y_j = 1$ with probability $\frac{b_j}{n}$.
	Since every agent labels exactly $n\sqrt{\log n}$ items, we have
	\begin{equation*}
	\textstyle \mathbf{E}[|M_b|] = \mathbf{E}[\sum_{i=1}^m y_i] = \frac{1}{n}\sum_{i=1}^m b_i = n\sqrt{\log n}.
	\end{equation*}
	
	Applying Chernoff bound we have
	\begin{align*}
	\textstyle \Pr[ \sum_{i=1}^m y_i \geq 2n\sqrt{\log n}] \leq \exp\left(-\frac{n\sqrt{\log n}}{3} \right) < \frac{1}{n}.
	\end{align*}
	
	Thus, with probability at least $1-\frac{2}{n}$, every agent $i$ receives a bundle of items with cost $O(\sqrt{\log n}\cdot \MMS_i)$ in the two phases combined.
	Since in the worse case, $i$ receives a total cost of at most $n\cdot \MMS_i$, in expectation, the approximation ratio is $(1-\frac{2}{n})\cdot O(\sqrt{\log n}) + \frac{2}{n}\cdot n = O(\sqrt{\log n})$.
\end{proof}

%Next we show that $\RD$ is SP.

\begin{lemma}
	$\RD$ is SP in expectation.
\end{lemma}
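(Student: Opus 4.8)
The plan is to compute agent $i$'s expected cost explicitly as a function of her report and observe that it is minimized at the truthful report. Since the algorithm only uses the set $M_i$ of the $K$ items each agent labels ``large'', the only meaningful strategic choice of agent $i$ is which size-$K$ set $M_i$ to declare, and the truthful ranking declares $M_i$ to be the set of the $K$ items of largest true cost $c_{ij}$. So it suffices to show that $\mathbf{E}[c_i(x_i)]$, as a function of $M_i$ with the sets $M_k$ ($k\neq i$) and all internal randomness held in the background, is minimized when $M_i$ consists of the $K$ highest-cost items.

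First I would fix agent $i$ and, for each item $j$, write $\beta_j=|\{k\neq i: j\in M_k\}|$, a quantity independent of $i$'s report. I would decompose $i$'s final bundle into the disjoint union of (a) the Phase 1 items assigned to $i$ that are not in $M_i$ (a Phase 1 item assigned to $i$ enters $M_b$ exactly when $i$ labeled it large, so the ones not in $M_i$ are precisely the items $i$ keeps), and (b) the items $i$ draws in the Phase 2 redistribution of $M_b$. For (b) I would use that item $j$ lands in $M_b$ with probability $b_j/n$, where $b_j=\beta_j+\mathbf{1}[j\in M_i]$, because Phase 1 sends $j$ to a uniformly random agent and $j$ enters $M_b$ iff that agent labeled it large; and that conditioned on $j\in M_b$, the uniform even redistribution hands $j$ to each agent --- in particular to $i$ --- with probability $1/n$, independently of who held it in Phase 1. (The possible non-integrality of $|M_b|/n$ is handled by the standard rounding convention and does not affect the per-item probability $1/n$.) Putting these together, the expected contribution of item $j$ to $c_i(x_i)$ is $\tfrac{b_j}{n}\cdot\tfrac1n c_{ij}$ when $j\in M_i$ (such an item is never kept from Phase 1, only possibly returned in Phase 2) and $\big(\tfrac1n+\tfrac{\beta_j}{n}\cdot\tfrac1n\big)c_{ij}$ when $j\notin M_i$ (kept with probability $1/n$, or returned via $M_b$).

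Summing over all items and substituting $b_j=\beta_j+\mathbf{1}[j\in M_i]$ gives
\begin{equation*}
\mathbf{E}[c_i(x_i)] \;=\; \frac1{n^2}\sum_{j\in M}\beta_j\,c_{ij} \;+\; \frac1n\sum_{j\in M}c_{ij} \;-\; \frac{n-1}{n^2}\sum_{j\in M_i}c_{ij},
\end{equation*}
where the first two terms do not depend on $M_i$. Since all costs are non-negative and $\frac{n-1}{n^2}>0$, this expression is minimized precisely when $\sum_{j\in M_i}c_{ij}$ is as large as possible over size-$K$ subsets $M_i$, i.e., when $M_i$ is the set of the $K$ items of largest true cost --- which is exactly the set induced by reporting truthfully. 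Hence no misreport gives agent $i$ a smaller expected cost than the truthful report, so $\RD$ is SP in expectation. The step I expect to require the most care is the per-item probability bookkeeping in (b): verifying that ``kept from Phase 1'' and ``returned in Phase 2'' are disjoint events, that conditioning on the set $M_b$ decouples the Phase 2 lottery from the rest of the randomness (and from $M_i$), and that the rounding of $|M_b|/n$ is harmless.
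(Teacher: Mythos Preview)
Your proposal is correct and follows essentially the same approach as the paper: both compute the expected cost of agent $i$ as a function of the chosen set $M_i$, show that it equals a term independent of $M_i$ minus a positive multiple of $\sum_{j\in M_i}c_{ij}$, and conclude that truthfully labeling the $K$ largest-cost items is optimal. The only cosmetic difference is that the paper bundles the Phase~2 contribution as $\frac{1}{n}\cdot\mathbf{E}[c_i(M_b)]$ and then separates out the effect of $M_i$ on $\mathbf{E}[c_i(M_b)]$, whereas you do the bookkeeping item by item; the resulting expression and the minimizing argument are identical.
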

\begin{proof}
	To prove that the algorithm is SP in expectation, it suffices to show that for every agent, the expected cost it is assigned is minimized when being truthful.
	Let $K=n\sqrt{\log n}$ and fix any agent $i$.
	Suppose $c_{i1},\ldots,c_{iK}$ are the costs of items labelled ``large'' by the agent; and $c_{i,K+1},\ldots,c_{im}$ are the remaining items.
	Then the expected cost assigned to the agent in the first phase is given by $\frac{1}{n}\sum_{j=K+1}^m c_{ij}$, as every item is assigned to the agent with probability $\frac{1}{n}$.
	Now we consider the cost the agent is assigned in the second phase.
	
	Recall that the expected total cost of items to be reallocated in the second phase is $\mathbf{E}[\sum_{j\in M_b} c_{ij}] = \sum_{j=1}^m c_{ij}\cdot \frac{b_j}{n}$, where $b_j$ is the number of agents that label item $j$ ``large''.
	Let $\mathcal{E}$ be this expectation when agent $i$ does not label any item ``large''. 
%	Note that $\mathcal{M}$ is independent of the claim made by the agent.
		
	By labelling $c_{i1},\ldots, c_{iK}$ ``large'', agent $i$ increases the probability of each item $j\leq K$ being included in $M_b$ by $\frac{1}{n}$.
	Thus it contributes an $\frac{1}{n}\sum_{j=1}^K c_{ij}$ increase to the expectation of total cost of $M_b$.
	In other words, we have
	\begin{equation*}
	\textstyle \mathbf{E}[\sum_{j\in M_b} c_{ij}] = \mathcal{E} + \frac{1}{n}\sum_{j=1}^K c_{ij}.
	\end{equation*}
	
	Since a random subset of $\frac{|M_b|}{n}$ items from $M_b$ will be assigned to agent $i$, the expected total value of items assigned to the agent in the two phases is given by
	\begin{equation*}
	\textstyle \frac{1}{n}\sum_{j=K+1}^m c_{ij} + \frac{1}{n}\cdot \left( \mathcal{E} + \frac{1}{n}\sum_{j=1}^K c_{ij} \right).
	\end{equation*}
	
	Obviously, the expression is minimized when $c_{i1}+\ldots+c_{iK}$ is maximized.
	Hence every agent minimizes its expected cost by telling the true ranking over the items. 
\end{proof}

\section{Cardinal Model}

First, we present a lower bound on the approximation ratio for the all deterministic SP cardinal algorithms.

\begin{lemma}\label{thm:3/2-hardness}
	No deterministic cardinal SP algorithm has approximation ratio smaller than $\frac{4}{3}$, even for $2$ agents and $4$ items.
\end{lemma}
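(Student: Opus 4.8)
The plan is to exhibit a small family of instances with $n=2$ agents and $m=4$ items and show that no deterministic SP cardinal algorithm can do better than $\frac43$ on all of them. I would start from the monotonicity property (Lemma~\ref{lemma:monotone}): every SP cardinal algorithm is monotone, so I can reason about how the allocation must change as one agent raises or lowers individual costs. The key instance to anchor on is the one where both agents have cost vector $(1,1,1,1)$, so each agent's $\MMS$ equals $2$; by symmetry and a pigeonhole argument, some agent, say agent $1$, receives at least $2$ items in that instance, hence exactly $2$ items if the ratio is to beat $\frac32$.

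Next I would perturb agent $1$'s costs to push one of those two items up in cost. Concretely, take $c_1=(3,1,1,1)$ while keeping $c_2=(1,1,1,1)$. Using monotonicity: starting from $(1,1,1,1)$ where agent $1$ holds some $2$-item bundle, I raise the cost of the items agent $1$ does \emph{not} hold first (which by monotonicity cannot change agent $1$'s bundle), and then I must deal with raising the cost of an item agent $1$ \emph{does} hold. Here I would argue that to keep the ratio below $\frac43$ — since $\MMS_1=3$ in the $(3,1,1,1)$ instance and agent $1$ holding two items of total cost $\ge 3+1=4$ would already be a $\frac43$-approximation violation — agent $1$ must be relieved of the expensive item, i.e.\ end up with at most one item. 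But then agent $2$ receives at least $3$ items; returning $c_2$ toward the symmetric profile and invoking monotonicity on agent $2$'s side forces agent $2$ to hold $3$ items in some instance where $\MMS_2=2$, giving a ratio $\ge\frac32$. The tension between "agent $1$ must shed the costly item" and "then agent $2$ is overloaded" is what yields the $\frac43$ (in fact the argument seems to naturally produce a contradiction with \emph{any} ratio below $\frac43$, matching the ordinal bound of Lemma~\ref{thm:4/3-hardness} but now without leaning on ordinal-ness — the cardinal power is neutralized by monotonicity).

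The main obstacle I anticipate is bookkeeping the case analysis cleanly: monotonicity only tells us $x_i$ is unchanged under a single-coordinate move in the allowed direction, so I must sequence the perturbations carefully (raise non-held items one at a time, then the held item) and track which agent's bundle is pinned at each step, making sure the two agents' worst instances are genuinely forced by the \emph{same} algorithm. A secondary subtlety is the tie-breaking / genericity assumption: the excerpt notes we may assume all bundle costs are distinct by infinitesimal perturbation, and I would need this so that monotonicity gives a unique well-defined allocation along the path; I would state up front that all numerical instances are to be read as such perturbed versions, which does not affect the $\frac43$ bound up to $o(1)$ (and then note the bound is in fact exactly $\frac43$ by taking the perturbations to $0$). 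Modulo these care points, the proof is short: it is essentially Lemma~\ref{thm:4/3-hardness}'s two instances, with "deterministic ordinal" upgraded to "deterministic cardinal SP" by substituting the pigeonhole-plus-monotonicity argument for the "identical rankings" argument.
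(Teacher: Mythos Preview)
Your approach has a genuine gap at the step ``agent $1$ must be relieved of the expensive item, i.e.\ end up with at most one item.'' Being relieved of item $1$ does \emph{not} force agent $1$ down to a single item. At the profile $c_1=(3,1,1,1)$, $c_2=(1,1,1,1)$ one has $\MMS_1=3$, $\MMS_2=2$, and the allocation $X_1=\{3,4\}$, $X_2=\{1,2\}$ gives both agents ratio at most $1$; so agent $2$ need not receive three items, and the two instances $(1,1,1,1),(1,1,1,1)$ and $(3,1,1,1),(1,1,1,1)$ by themselves yield no contradiction. Monotonicity does not help you link them either: you are raising the cost of an item agent $1$ \emph{holds}, which is exactly the direction Lemma~\ref{lemma:monotone} does not control, so the bundle may change arbitrarily. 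An SP algorithm can legitimately assign agent $1$ the bundle $\{1,2\}$ at the symmetric profile and $\{3,4\}$ at the perturbed one without violating anything you have established.

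The paper's proof anchors instead at $c_1=c_2=(3,1,1,1)$, where the only sub-$\frac43$ allocation is a forced $\{1\}$-versus-$\{2,3,4\}$ split. It then moves, via two monotonicity steps on agent $1$'s costs (decrease $c_{11}$, then increase $c_{12}$), to $c_1=(1,3,1,1)$, $c_2=(3,1,1,1)$ with the allocation pinned, and finally changes $c_2$ to $(2,2,1,1)$, invoking strategyproofness directly (not merely monotonicity) to exclude every proper subset of $\{2,3,4\}$ for agent $2$; a short case check at that final profile gives the contradiction. The idea your plan is missing is to start from a profile where the allocation is \emph{uniquely} determined (a $1$-versus-$3$ split) and then to vary the \emph{other} agent's report, using SP itself to close off the escape routes; a $2$-versus-$2$ starting point leaves the algorithm too much freedom.
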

\begin{proof}
	First, consider $c_{1}=c_{2}=(3,1,1,1)$, for which $\MMS_{1}=\MMS_{2}=3$.
	To obtain an approximation smaller than $\frac{4}{3}$, the only possible allocation is to assign the first item to some agent, and the remaining items to the other.
	Without loss of generality, suppose agent $1$ receives the first item.
	
	By monotonicity of SP algorithms (Lemma~\ref{lemma:monotone}), for the case when $c_{1}=(1,3,1,1)$ and $c_{2}=(3,1,1,1)$. The assignment remains unchanged.
	Now we consider the profile when $c_{1}=(1,3,1,1)$ and $c_{2}=(2,2,1,1)$.
	
	Note that we also have $\MMS_{1}=\MMS_{2}=3$, and thus (to guarantee the approximation ratio) agent $2$ cannot receive the last three items.
	Moreover, to guarantee SPness, agent $2$ cannot receive a proper subset of the last three items, as otherwise agent $2$ will misreport $(2,2,1,1)$ when its true value is $(3,1,1,1)$.
	Thus the first item must be assigned to agent $2$, and consequently the second item must be assigned to agent $1$.
	To guarantee the approximation ratio, agent $1$ should not receive any other item, which means that agent $2$ must receive the last two items, which violates the better-than-$\frac{4}{3}$ approximation ratio.
\end{proof}

For positive results, by Lemma \ref{lem:model:imply}, both algorithms in Sections \ref{sec:ord:det} and \ref{sec:ord:ran} apply to the cardinal model.
Thus we immediately have the following.

\begin{corollary}\label{th:strategyproof-ordinal-cardinal}
	For the cardinal model, there exists a deterministic SP algorithm with approximation ratio $O(\log \frac{m}{n})$;
	and a randomized SP-in-expectation algorithm with approximation ratio $O(\sqrt{\log n})$.
\end{corollary}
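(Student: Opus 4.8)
The plan is to derive the corollary directly from the reduction lemma, Lemma~\ref{lem:model:imply}, rather than designing anything new. First I would recall what Lemma~\ref{lem:model:imply} gives us: any algorithm that is strategyproof and $\alpha$-approximate in the ordinal model yields, via the obvious lift (read the reported cardinal profile, extract the induced ranking, and run the ordinal algorithm on it), a cardinal algorithm that is strategyproof and $\alpha$-approximate. So it suffices to point at the two ordinal algorithms already constructed in Sections~\ref{sec:ord:det} and~\ref{sec:ord:ran}.

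For the deterministic claim, I would invoke Theorem~\ref{th:strategyproof-ordinal}: the algorithm $\SP$ is a strategyproof ordinal algorithm with approximation ratio $O(\log\frac{m}{n})$. Passing it through Lemma~\ref{lem:model:imply} produces a deterministic strategyproof cardinal algorithm with the same ratio. For the randomized claim, I would invoke Theorem~\ref{th:strategyproof-ordinal:random}: the algorithm $\RD$ is SP in expectation and $O(\sqrt{\log n})$-approximate in the ordinal model, and the same lift produces a cardinal algorithm that is SP in expectation with the same expected approximation ratio.

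The only point that deserves an explicit (but very short) check is that the reduction really applies in the randomized regime. Here I would note two things. First, strategyproofness in expectation is preserved: a misreport of cardinal values by an agent affects the outcome only through the induced ranking, and since $\RD$ is SP in expectation over rankings, the truthful ranking — hence the truthful cardinal report — minimizes the agent's expected cost. Second, the expected-approximation guarantee is unchanged: when all agents are truthful the induced rankings are the true rankings, and each $\MMS_i$ depends only on the fixed true cardinal costs, so the bound $\mathbf{E}_{x\sim\mathcal M}[\max_i c_i(x)/\MMS_i]\le\alpha$ established for $\RD$ transfers verbatim.

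I do not anticipate a genuine obstacle; the corollary is a bookkeeping consequence of Lemma~\ref{lem:model:imply}, and all the real work lives in the preceding two sections. If there is anything to be careful about, it is only not to overclaim: the cardinal lift ignores cardinal information entirely, so it cannot circumvent the deterministic lower bound of $\frac{4}{3}$ from Lemma~\ref{thm:3/2-hardness}, which is consistent with the corollary being stated purely as an upper bound.
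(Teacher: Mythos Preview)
Your proposal is correct and matches the paper's own argument: the corollary is obtained immediately by applying Lemma~\ref{lem:model:imply} to the ordinal algorithms of Theorems~\ref{th:strategyproof-ordinal} and~\ref{th:strategyproof-ordinal:random}. Your extra remarks about the randomized reduction are fine but not needed, since Lemma~\ref{lem:model:imply} is stated generically for SP $\alpha$-approximate algorithms.
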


%Note that both of Lemmas \ref{thm:4/3-hardness} and \ref{thm:3/2-hardness} can imply that 
%for the ordinal model, no SP algorithm has approximation ratio smaller than $\frac{4}{3}$, even for $2$ agents and $4$ items.

\section{Public Ranking Model}

\citet{ABM16a} provided a deterministic SP public ranking algorithm which is $O(n)$-approximate if the items are goods.
In this section, we show that if the items are chores, we can do much better.

\subsection{Strategyproof Algorithms}

We first give a simple SP algorithm with an approximation ratio of at most 2.

\paragraph{$\RR$.}
Fix an arbitrary order of the agents, let the agents pick items in the round-robin manner. 
%are given the item with the highest utility among all remaining items at each pick.

\begin{theorem}\label{th:2upper}
	For the public ranking model, $\RR$ is SP and has an approximation ratio of $2-\frac{1}{n}$.
\end{theorem}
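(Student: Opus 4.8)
The plan is to dispatch strategyproofness and the approximation ratio separately; strategyproofness is essentially free, so the substance of the argument is the $2-\tfrac1n$ bound. For strategyproofness, I would observe that in the public ranking model every agent's ranking $\sigma_i$ is fixed and available to the algorithm, and that $\RR$ never consults the reported cardinal values: on agent $i$'s turn it simply allocates to $i$ the remaining item that is lowest in $\sigma_i$ (ties being ruled out by the paper's generic perturbation convention). Hence the output of $\RR$ depends only on $(\sigma_1,\dots,\sigma_n)$ and the fixed agent order, so no admissible report $c_i'$ (i.e.\ one consistent with $\sigma_i$) changes agent $i$'s bundle, and the incentive inequality of the SP definition holds with equality.

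For the ratio, I would relabel so that the round-robin order is $1,2,\dots,n$; then agent $i$ picks at global turns $i,i+n,i+2n,\dots$ and so receives exactly $k_i=\lceil (m-i+1)/n\rceil$ items (and $k_i\ge 1$ since $m>n$). Fix agent $i$ and relabel the items so that $c_{i1}\ge c_{i2}\ge\cdots\ge c_{im}$. When agent $i$ makes its $t$-th pick, exactly $(t-1)n+i-1$ items have been removed and agent $i$ takes a cheapest remaining item, so its $t$-th pick costs at most the $\big((t-1)n+i\big)$-th cheapest value for agent $i$, i.e.\ at most $c_{i,\,m-(t-1)n-i+1}$. Setting $\rho:=m-i+1-(k_i-1)n$, one checks $1\le\rho\le n$, and the positions $m-(t-1)n-i+1$ for $t=1,\dots,k_i$ are exactly $\rho,\rho+n,\dots,\rho+(k_i-1)n$, so
\[
c_i(X_i)\ \le\ c_{i\rho}+\sum_{b=1}^{k_i-1}c_{i,\,\rho+bn}.
\]

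Next I would average each tail term over a block of $n$ items: for $b\ge 1$ the $n$ items at positions $\rho+(b-1)n+1,\dots,\rho+bn$ each cost at least $c_{i,\rho+bn}$, hence $c_{i,\rho+bn}\le\frac1n\sum_{j=\rho+(b-1)n+1}^{\rho+bn}c_{ij}$; summing over $b$ and using $\sum_{j=1}^{\rho}c_{ij}\ge\rho\,c_{i\rho}$ gives $\sum_{b=1}^{k_i-1}c_{i,\rho+bn}\le\frac1n\big(c_i(M)-\rho\,c_{i\rho}\big)$. Plugging this in, invoking Lemma~\ref{lem:mms:bound} ($c_i(M)\le n\,\MMS_i$ and $c_{i\rho}\le c_{i1}\le\MMS_i$), and using $\rho\ge 1$ yields $c_i(X_i)\le\frac{n-\rho}{n}c_{i\rho}+\frac1n c_i(M)\le\frac{n-1}{n}\MMS_i+\MMS_i=(2-\tfrac1n)\MMS_i$.

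The main obstacle is getting the \emph{exact} constant. Bounding every pick by its worst-case rank and summing crudely loses a harmonic $\Theta(\log\frac mn)$ factor, and even after the block-averaging trick the naive bound is only $2\,\MMS_i$; shaving the last $\tfrac1n$ requires noticing that the single ``full-$\MMS_i$'' term $c_{i\rho}$ contributed by agent $i$'s final pick is already charged, with multiplicity $\rho\ge 1$, inside the prefix $\sum_{j=1}^{\rho}c_{ij}$ that gets subtracted when the tail sum is relaxed to $c_i(M)$, so the write-up must carry that cancellation through rather than discard it. I would also double-check the boundary computations ($1\le\rho\le n$, that the invoked positions are genuine item indices $\le m$, and the $k_i=1$ case) since that is where an off-by-one could creep in.
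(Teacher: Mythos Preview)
Your strategyproofness argument is essentially identical to the paper's: both observe that $\RR$ uses only the (public) rankings, so in the public ranking model an agent's report cannot affect the outcome.

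For the approximation ratio the two proofs diverge. The paper does not prove the $2-\tfrac1n$ bound at all; it simply invokes the result of Aziz, Rauchecker, Schryen and Walsh (2017) that greedy round-robin on chores is $(2-\tfrac1n)$-approximate. You instead give a self-contained argument: bound agent $i$'s $t$-th pick by the $((t-1)n+i)$-th cheapest item, block-average each tail term over $n$ consecutive positions, and retain the prefix $\sum_{j\le\rho}c_{ij}\ge\rho\,c_{i\rho}$ to cancel the $\tfrac{\rho}{n}c_{i\rho}$ and recover the exact $2-\tfrac1n$ constant. I checked your boundary computations ($1\le\rho\le n$, positions in $[1,m]$, the $k_i=1$ case) and they are all fine, so your proof is correct. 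The paper's route is shorter and offloads the work to prior literature; yours is longer but makes the write-up independent of the cited reference and exposes exactly where the extra $\tfrac1n$ comes from.
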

\begin{proof}
	\citet{ARSW17a} proved that $\RR$ gives an approximation bound of $2-\frac{1}{n}$ if at each round every agent is allocated the item with smallest cost.
	%We argue that the algorithm is SP in the public ranking model. 
	Note that the algorithm is ordinal hence no agent can change the outcome by misreporting her cardinal utilities. Furthermore, since the preference rankings of the agents are public knowledge, agents cannot misreport by expressing a different ordinal preference. Hence the algorithm is SP for the public ranking model.
\end{proof}

By Theorem \ref{th:2upper}, when $n=2$ and $3$, the algorithm gives a $\frac{3}{2}$ and $\frac{5}{3}$ approximations, respectively.
Indeed, for $n=3$, we show a divide-and-choose algorithm which is SP and still guarantees $\frac{3}{2}$-approximation ratio.

\begin{theorem}\label{th:pr:n=3}
	For the public ranking model, there exist an SP $1.5$-approximation algorithm when $n=3$.
\end{theorem}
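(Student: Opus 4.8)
The plan is to give a divide‑and‑choose scheme with a twist that makes the divider strategyproof for free: we designate one agent (say agent $1$, after fixing an arbitrary order of the agents) as the divider, but she divides using \emph{only her publicly known ranking} $\sigma_1$, never her reported cardinal costs. Writing agent $1$'s items from most to least costly as $\sigma_1(1),\sigma_1(2),\dots,\sigma_1(m)$, the divider's partition is $P_1=\{\sigma_1(1)\}$ (the single costliest item alone), $P_2=\{\sigma_1(2),\sigma_1(4),\sigma_1(6),\dots\}$ and $P_3=\{\sigma_1(3),\sigma_1(5),\sigma_1(7),\dots\}$ (the remaining items dealt alternately). Then agent $2$ takes a bundle of minimum cost under $c_2$; agent $3$ takes a minimum‑cost bundle under $c_3$ among the two remaining; agent $1$ receives the last one.

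Strategyproofness is then immediate and modular. The three bundles are a function of $\sigma_1$ alone, so agent $1$'s report cannot affect the partition, and her final bundle is a function of $(\sigma_1,c_2,c_3)$ only, hence she has nothing to gain by misreporting. Agent $2$ is offered a fixed menu of three bundles and picks her cheapest, so truth‑telling is (weakly) optimal; the same holds for agent $3$, who faces a fixed pair of bundles (determined by $c_1$'s public ranking and $c_2$, not by $c_3$) when it is her turn. Hence the mechanism is SP for the public ranking model.

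For the ratio I would use only the two bounds of Lemma~\ref{lem:mms:bound}, namely $c_i(M)\le 3\,\MMS_i$ and $c_{ij}\le \MMS_i$. Agent $2$ gets $\min_j c_2(P_j)\le \tfrac13 c_2(M)\le \MMS_2$, and agent $3$ gets the cheaper of two bundles whose $c_3$‑costs sum to at most $c_3(M)$, hence at most $\tfrac12 c_3(M)\le \tfrac32\,\MMS_3$. For agent $1$, set $p_t=c_{1\sigma_1(t)}$ so that $p_1\ge p_2\ge\cdots$; then $c_1(P_1)=p_1\le \MMS_1$, and the key balancing fact is $c_1(P_2)-c_1(P_3)=(p_2-p_3)+(p_4-p_5)+\cdots\le p_2\le p_1$, which telescopes and yields $2\,c_1(P_2)\le c_1(P_2)+c_1(P_3)+p_1=c_1(M)$, so $c_1(P_3)\le c_1(P_2)\le\tfrac12 c_1(M)\le \tfrac32\,\MMS_1$. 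Whichever of the three bundles agent $1$ ends up with, its cost is at most $\tfrac32\,\MMS_1$, so the mechanism is $\tfrac32$‑approximate.

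The only real design decision — not a genuine obstacle once spotted — is exactly this: the partition must be a function of the public ranking alone so that the divider cannot manipulate it (any cardinal‑based division, e.g.\ the divider's own maxmin partition, is easily gamed by reporting a near‑uniform cost vector, which turns the division into a bundle that is cheap for her true costs but unattractive to the other two), yet it must still be a $\tfrac32$‑approximate maxmin partition for the divider; the ``costliest item alone, the rest alternating'' rule does both, with the telescoping inequality above doing all the work.
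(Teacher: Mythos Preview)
Your proposal is correct and is essentially the paper's own proof: the same ``costliest item alone, the rest alternating by $\sigma_1$'' partition, the same choosing order (agent~$2$, then agent~$3$, then agent~$1$), the same strategyproofness argument, and the same use of Lemma~\ref{lem:mms:bound} for the ratio. The only cosmetic difference is in bounding $c_1(P_2)$: you telescope $c_1(P_2)-c_1(P_3)\le p_1$ to get $2c_1(P_2)\le c_1(M)$, whereas the paper pairs $p_1\ge p_2,\ p_3\ge p_4,\dots$ directly to obtain $c_1(S_1\cup S_3)\ge c_1(S_2)$; both yield $c_1(P_2)\le \tfrac12 c_1(M)$.
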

\begin{proof}
	Without loss of generality, we order the items according to agent 1's ranking, 
	i.e., $\sigma_{1}(j)=j$ for any $j\in M$ and $c_{11}\geq \cdots \geq c_{1m}$.
	The algorithm runs as follows:
	\begin{itemize}
		\item Let $S_{1}=\{1\}$, $S_{2}=\{j\in M| j \bmod 2=0\}$ and $S_{3}=\{j\in M| j>1 \mbox{ and } j \bmod 2=1\}$. Note that $(S_{1},S_{2},S_{3})$ is a partition of all the items.
		\item Let agent 2 select her favourite bundle from $S_{1}$, $S_{2}$, $S_{3}$.
		\item Let agent 3 select her favourite one from the two bundles left in Step 2, and assign the last bundle to agent 1.
	\end{itemize}
	It is easy to see that the above algorithm is SP as the algorithm does not use any information reported by agent 1
	and both of agent 2 and agent 3's best strategy is to report the costs such that the bundle with smallest cost is selected.
	We are left to prove the approximation ratio. 
	
	For agent 1, note that $c_{1}(S_{1}\cup S_{3}) \geq c_{1}(S_{2}) \geq c_{1}(S_{3})$.
	By the first argument of Lemma~\ref{lem:mms:bound}, $c_{1}(S_{3}) \leq c_{1}(S_{2}) \leq \frac{1}{2}c_{1}(M) \leq \frac{3}{2}\cdot\MMS_{1}$.
	By the second argument of Lemma~\ref{lem:mms:bound}, $c_{1}(S_{1}) \leq \MMS_{1}$.
	That is, no matter which bundle is left, the cost of this bundle is at most $\frac{3}{2}\cdot\MMS_{1}$ to agent 1.
	
	As agent 2 gets her best bundle, her cost is at most $\MMS_2$.
	
	For agent 3, since she is still able to select one from two bundles,
	her cost for the better bundle is at most $\frac{1}{2}\cdot c_{3}(M)$ which is at most $\frac{3}{2}\cdot \MMS_{3}$ by Lemma~\ref{lem:mms:bound}.
	
	In conclusion, the algorithm is a $\frac{3}{2}$-approximation.
\end{proof}

\subsection{Limitation of Strategyproof Algorithms}
Next, we complement the upper bound results with a lower bound result for the the public ranking model.

\begin{lemma}\label{th:6/5-hardness}
	For the public ranking model, no deterministic SP algorithm has an approximation ratio smaller than $\frac{6}{5}$, even for $2$ agents.
\end{lemma}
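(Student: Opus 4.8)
The strategy is to combine the monotonicity of strategyproof algorithms (Lemma~\ref{lemma:monotone}) with strategyproofness applied across several cost profiles that all share the same public ranking, and then to clash the resulting allocation constraints against the $\frac{6}{5}$-approximation requirement. Concretely, I would fix the common ranking of both agents to be $1\succ 2\succ\cdots\succ m$ over a suitable number $m$ of items, perturbing costs so that every agent's bundle costs are pairwise distinct (as the excerpt already permits). Every profile used will be consistent with this ranking; the crucial point is that, unlike in the cardinal lower bound of Lemma~\ref{thm:3/2-hardness} — where an agent's ranking was flipped so that its most-hated item became its favourite — here no reordering of items is available, which is precisely why only $\frac{6}{5}$ rather than $\frac{4}{3}$ can be forced.

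The first step is to pin down the algorithm's behaviour on a base profile. Starting from a symmetric profile $c_1=c_2=v$ chosen so that the $\frac{6}{5}$-approximation requirement admits essentially one allocation up to swapping the agents (for instance one in which the costliest item, together with the shape of $\MMS$, must be placed in a single specific way), one names that allocation $(X_1,X_2)$ without loss of generality using $c_1=c_2$. The second step uses monotonicity to move to a more convenient profile: repeatedly increase the cost an agent assigns to items it does not hold and/or decrease the cost it assigns to items it does hold — always staying inside the fixed ranking — so that by Lemma~\ref{lemma:monotone} the allocation $(X_1,X_2)$ survives while the cardinal data is reshaped (e.g.\ so that $X_1$ becomes only marginally acceptable to agent~$1$, or agent~$2$'s $\MMS$ is driven down). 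Call this profile $P^\star=(c_1^\star,c_2^\star)$. The third step passes to a sibling profile $P'=(c_1^\star,c_2')$ differing from $P^\star$ only in agent~$2$'s (still ranking-consistent) report: strategyproofness, applied with agent~$2$'s true cost equal to $c_2^\star$, forces agent~$2$'s bundle in $P'$ to cost at least $c_2^\star(X_2)$ under $c_2^\star$, confining it to a short list of candidates (those containing the costliest item, together with $X_2$ and $M$). For each candidate one then uses the explicit $\MMS$ values of $c_1^\star$ and $c_2'$ (via Lemma~\ref{lem:mms:bound}) to check that it makes either agent~$1$ or agent~$2$ pay more than $\frac{6}{5}$ times its maxmin share; if a single sibling profile does not close all candidates, one inserts a short chain of such profiles, alternately constraining the two agents, and argues by cases on $(X_1,X_2)$.

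The delicate point — and the main obstacle — is the choice of the base profile $v$, the monotone reshaping, and the sibling report $c_2'$. The constraints extracted from monotonicity and strategyproofness tend to be \emph{complementary}: ruling out a bundle $B$ for one agent leaves its complement $M\setminus B$ available, and ruling out both $B$ and $M\setminus B$ for the \emph{same} agent is impossible for $\MMS$ reasons (since $\MMS_i\ge\frac12 c_i(M)$). Hence the gadget must carefully split this bookkeeping between the two agents, and one must choose the number of items and the costs so that the arithmetic pinches at exactly $\frac{6}{5}$ rather than at some weaker constant; with $m$ fixed to a small value, the finitely many allocations are then verified by hand.
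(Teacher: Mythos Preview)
Your strategy is exactly the one the paper uses: start from a symmetric profile on which the $\frac{6}{5}$ bound forces an essentially unique allocation, then alternate monotonicity moves (Lemma~\ref{lemma:monotone}) and strategyproofness comparisons between sibling profiles, all within one fixed public ranking, until two chains of constraints collide. Your diagnosis of the obstacle (you cannot rule out both $B$ and $M\setminus B$ for the same agent, so the two agents must share the bookkeeping) is also right and is precisely why the argument needs a short \emph{chain} of profiles rather than a single deviation.

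What is missing is the actual gadget and the case analysis. As written this is a plan, not a proof: you never commit to a value of $m$, a base vector $v$, or the sibling reports, and you never carry out the verification you promise ``by hand''. That is the genuine gap. For comparison, the paper instantiates your outline with $n=2$, $m=6$, common ranking in which item~$6$ is (weakly) the costliest, and only three cost values $1,3,5$. From $c_1=c_2=(1,1,1,1,1,1)$ the $\frac{6}{5}$ bound forces a $3$--$3$ split; monotonicity then pins the allocation at $(1,1,1,1,1,5)$ vs.\ $(1,1,1,1,1,1)$; a strategyproofness comparison against $(1,1,1,1,1,5)$ vs.\ $(1,1,1,1,1,5)$ forces item~$6$ to agent~$1$ alone; and one more monotonicity step reaches $(1,1,1,1,1,3)$ vs.\ $(1,1,1,1,1,5)$ with $A_2=\{1,\dots,5\}$. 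A second chain starting instead from $(1,1,1,1,1,3)$ vs.\ $(1,1,1,1,1,1)$ and moving to $(1,1,1,1,1,3)$ vs.\ $(1,1,1,1,1,3)$ forces $|A_2|=4$ at the same terminal profile $(1,1,1,1,1,3)$ vs.\ $(1,1,1,1,1,5)$, yielding the contradiction. Filling in numbers of this kind is what turns your outline into a proof.
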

\begin{proof}
	Let $n=2$ and $m=6$.
	Assume for contradiction that there exists some SP algorithm with approximation ratio less than $\frac{6}{5}$.
	Suppose $c_1=(1,1,1,1,1,1)$ and $c_2=(1,1,1,1,1,1)$, then we have $\MMS_1 = \MMS_2 = 3$.
	The algorithm must assign every agent exactly $3$ items, as otherwise the cost for one of them is $4 \geq \frac{6}{5}\cdot 3$.
	Without loss of generality, assume that $A_1 = \{1,2,3\}$ and $A_2 = \{4,5,6\}$.
	
	By Lemma~\ref{lemma:monotone}, when $c_1=(1,1,1,1,1,5)$ and $c_2=(1,1,1,1,1,1)$, the assignments remain unchanged, i.e., $A_1 = \{1,2,3\}$.
	If we further increase $c_{26}$ to $5$, we have $\MMS_1 = \MMS_2 = 5$.
	If $\{6\}\subseteq A_2$, we should have $A_2 = \{6\}$ to guarantee an approximation ratio less than $\frac{6}{5}$.
	Then we know that agent $2$ has incentive to lie when its true cost is $c_2=(1,1,1,1,1,1)$.
	Hence we have $\{6\}\subseteq A_1$, which implies $A_1 = \{6\}$ and $A_2 = \{1,2,3,4,5\}$.
	Again, by Lemma~\ref{lemma:monotone}, for $c_1=(1,1,1,1,1,3)$ and $c_2=(1,1,1,1,1,5)$, the assignment remains unchanged.
	
	Applying a similar argument, for $c_1=(1,1,1,1,1,3)$ and $c_2=(1,1,1,1,1,1)$, we have $A_1 = \{1,2,3\}$.
	Then for $c_1=(1,1,1,1,1,3)$ and $c_2=(1,1,1,1,1,3)$, we have $\{6\}\subseteq A_1$, $|A_1|\leq 2$, and $c_2(A_2) = 4$ to guarantee an approximation ratio less than $\frac{6}{5}$.
	Thus by Lemma~\ref{lemma:monotone}, for $c_1=(1,1,1,1,1,3)$ and $c_2=(1,1,1,1,1,5)$, the assignment does not change, i.e., $|A_2| = 4$, which contradicts the conclusion we draw in the previous paragraph.
\end{proof}

\section{Conclusion}

In this paper, we initiated the study of SP and approximately maxmin fair algorithms for chore allocation. 
Our study leads to several new questions. 
The most obvious research questions would be to close the gap between the lower and upper approximation bounds for SP algorithms 
and to study the lower bound of randomized SP cardinal algorithms.

At present we have two parallel lines of research for goods and chores. 
It is interesting to consider similar questions for combinations of goods and chores~\citep{ACI19a}.
Another direction is to study the SP fair allocation algorithms for the case of asymmetric agents~\citep{Aziz19a}.

\newpage
\bibliographystyle{named}
\bibliography{abb,fair}

\end{document}